\newtheorem{lemma}{Lemma}
\newtheorem{theorem}{Theorem}
\newtheorem{corollary}{Corollary}
\theoremstyle{definition}
\newtheorem{definition}{Definition}
\newtheorem{observation}{Observation}
\theoremstyle{remark}
\newcommand{\E}{\textnormal{\textrm{E}}}
\newcommand{\HH}{\ensuremath{\mathcal{H}}}
\DeclarePairedDelimiterX{\infdivx}[2]{}{}{ \left(#1\;\delimsize\middle\|\;#2\right) }
\DeclareMathOperator*{\D}{D}
\newcommand{\infdiv}{\D\infdivx}
\DeclareMathOperator*{\HE}{H}
\DeclareMathOperator*{\argmin}{arg\,min}
\newcommand{\TT}{\ensuremath{\mathcal{T}}}
\newcommand{\dist}{\ensuremath{\textnormal{dist}}}
\newcommand\numberthis{\addtocounter{equation}{1}\tag{\theequation}}
\newcommand{\matodo}[1]{}
\newcommand{\tatodo}[1]{}
\newcommand{\rptodo}[1]{}
\title{
Parameter-free Locality Sensitive Hashing for Spherical Range Reporting%
\footnote{The research leading to these results has received funding from the European Research Council under the European Union's 7th Framework Programme (FP7/2007-2013) / ERC grant agreement no. 614331.}
}
\author{Thomas D. Ahle, Martin Aumüller, and Rasmus Pagh}
\affil{IT University of Copenhagen, Denmark, \{thdy, maau, pagh\}@itu.dk }
\begin{document}

\maketitle

\setcounter{page}{0}
\begin{abstract}
We present a data structure for {\em spherical range reporting} on a point set $S$, i.e., reporting all points in $S$ that lie within radius $r$ of a given query point $q$ (with a small probability of error).
Our solution builds upon the Locality-Sensitive Hashing (LSH) framework of Indyk and Motwani, which represents the asymptotically best solutions to near neighbor problems in high dimensions.
While traditional LSH data structures have several parameters whose optimal
values depend on the distance distribution from $q$ to the points of $S$ (and in
particular on the number of points to report), our data structure is essentially
parameter-free and only takes as parameter the space the user is
willing to allocate.
Nevertheless, its expected query time basically matches that of an LSH data structure
whose parameters have been {\em optimally chosen for the data and query} in
question under the given space constraints.
In particular, our data structure provides a smooth trade-off between hard queries (typically addressed by standard LSH parameter settings) and easy queries such as those where the number of points to report is a constant fraction of~$S$, or where almost all points in~$S$ are far away from the query point.
In contrast, known data structures fix LSH parameters based on certain parameters of the input alone.

The algorithm has expected query time bounded by $O(t (n/t)^\rho)$, where $t$ is the number of points to report and $\rho\in (0,1)$ depends on the data distribution and the strength of the LSH family used.
We further present a parameter-free way of using multi-probing, for LSH families that support it, and show that for many such families this approach allows us to get expected query time close to $O(n^\rho+t)$, which is the best we can hope to achieve using LSH.
The previously best running time in high dimensions was $\Omega(t n^\rho)$, achieved by traditional LSH-based data structures where parameters are tuned for outputting a single point within distance $r$.
Further, for many data distributions where the intrinsic dimensionality of the point set close to~$q$ is low, we can give improved upper bounds on the expected query time.

\end{abstract}

\thispagestyle{empty}

\newpage
\setcounter{page}{1}
\section{Introduction}\label{sec:introduction}


\emph{Range search} is a central problem in computational geometry
\cite{AgarwalE99}. Given a set $S$ of $n$ points in~$\mathbb{R}^d$, build a data
structure that answers queries of the following type: Given a region $R$ (from a
predefined class of regions), \emph{count} or \emph{report} all points from 
$S$ that belong to $R$. Examples for such classes of regions are simplices \cite{Matousek94}, halfspaces
\cite{ChazelleLM08}, and spheres \cite{arya2010unified}. 

In this paper we place our main focus on the {\em spherical range reporting problem} (SRR): Given a distance parameter $r$
and a point set $S$, build a data structure that supports the following queries: Given a point $q$, report all points in $S$ within distance $r$ from $q$. This problem is closely related to 
\emph{spherical range counting} (``return the number of points'') and \emph{spherical range emptiness} (``decide whether there is a point at distance at most  $r$'').
Solving spherical range searching problems exactly, i.e., for $\varepsilon = 0$, and in time that is truly sublinear in the point set size $n=|S|$ seems to require space exponential in the dimensionality of the point set $S$.
This phenomenon is an instance of the \emph{curse of dimensionality}, and is supported by popular algorithmic hardness conjectures (see \cite{AlmanW15,Williams05}).

For this reason, most algorithms for range searching problems involve approximation of distances: For some approximation parameter $c>1$ we allow the data structure to only distinguish between distance~$\leq r$ and~$> cr$, while points at distance in between can either be reported or not. We refer to this relaxation as $c$-approximate SRR. Approximate range reporting and counting problems were for example considered by Arya et al. in \cite{arya2010unified}, by Indyk in his Ph.D. thesis
\cite{Indyk00} as 
``enumerating/counting point locations in equal balls'' and by Andoni in his Ph.D. thesis \cite{Andoni09} as ``randomized R-near neighbor reporting''.  
In low dimensions, tree-based approaches allow us to build efficient data structures with space usage $\tilde{O}(n\gamma^{d-1}(1 + (c-1)\gamma^2))$ and query time
$\tilde{O}(1/((c-1)\gamma)^{d-1})$ for a trade-off parameter $\gamma \in [1, 1/(c-1)]$ for an approximation factor
$1 < c \leq 2$, see \cite{arya2010unified}. 
The exponential dependency of time and/or space on the dimension makes these algorithms inefficient in high dimensions. 

Our approach uses the \emph{locality-sensitive hashing}
(LSH) framework \cite{IndykM98} which we will introduce in 
Section~\ref{sec:prelim}.  Using this technique to solve 
SRR is not new: Both Indyk \cite{Indyk00} and Andoni \cite{Andoni09} described 
extensions of the general LSH framework to solve this problem. As we will show, 
their approaches yield running times of $\Omega(tn^\rho)$, where $t$ is 
the number of elements at distance at most $cr$ from the query and $\rho \in (0, 1)$ is a parameter that depends
on the distance $r$, the approximation factor $c$, and the LSH family used to build
the data structure. When the output size $t$ is large this leads to running times of $\Omega(n^{1 + \rho})$,  which is worse
than a linear scan! 
Indyk~\cite{Indyk00} also describes a reduction from spherical range counting to 
the $(c,r)$-approximate near neighbor problem that asks to report a \emph{single} point from the result set of $c$-approximate SRR. 
The reduction uses  $O(\log^2 n/(c-1)^3)$ queries of independently built $(c,r)$-ANN data structures, giving 
a running time of $O(n^\rho\log^2 n/(c-1)^3)$. 
Building upon Indyk's technique, Chazelle et al.
\cite{ChazelleLM08} proposed a data structure that solves approximate halfspace
range queries on the unit sphere by applying a dimension reduction technique to
Hamming space. All of these algorithms use a standard LSH index data structure in a black-box fashion. 
We propose a data structure that is almost similar to a standard LSH data structure, but query it 
in an adaptive way. Our guarantees are probabilistic in the sense that each close point is with constant probability present
in the output. 

Using LSH-based indexes for range reporting means that we get all points closer than distance~$r$ with a certain
probability, as well as some fraction of the points with distance in the range $(r, cr)$.
When $c$ is large, this can have negative consequences for performance: a query could report nearly every point in the data set, and any performance gained from approximation is lost.
When the approximation factor $c$ is set close to $1$, data structures working in 
high dimensions usually need many independent repetitions to find points at distance~$r$.
This is another issue with such indexes that makes range reporting hard: very close points show up in every repetition, and we need to remove these duplicates.

\begin{figure}[t]
\centering
\includegraphics[width=.7\textwidth]{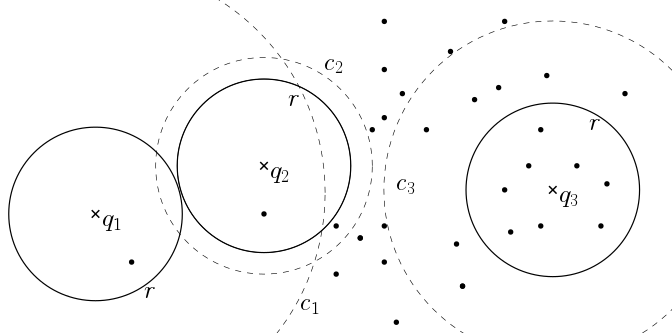}
\caption{Three queries at radius $r$ with points $q_1, q_2,$ and $q_3$ (crosses) in a data set (dots). 
Dashed circles around the queries show how far the radius-$r$ sphere can be stretched such that the number of points between radius $r$ and radius $c_i r$ equals the number of points in radius $r$. We see that queries $q_1$ and $q_3$ allow for a large stretch, while query $q_2$ has a small stretch. 
}
\label{fig:example}
\end{figure}

\begin{figure}[t]
\vspace{1em} 
\centering
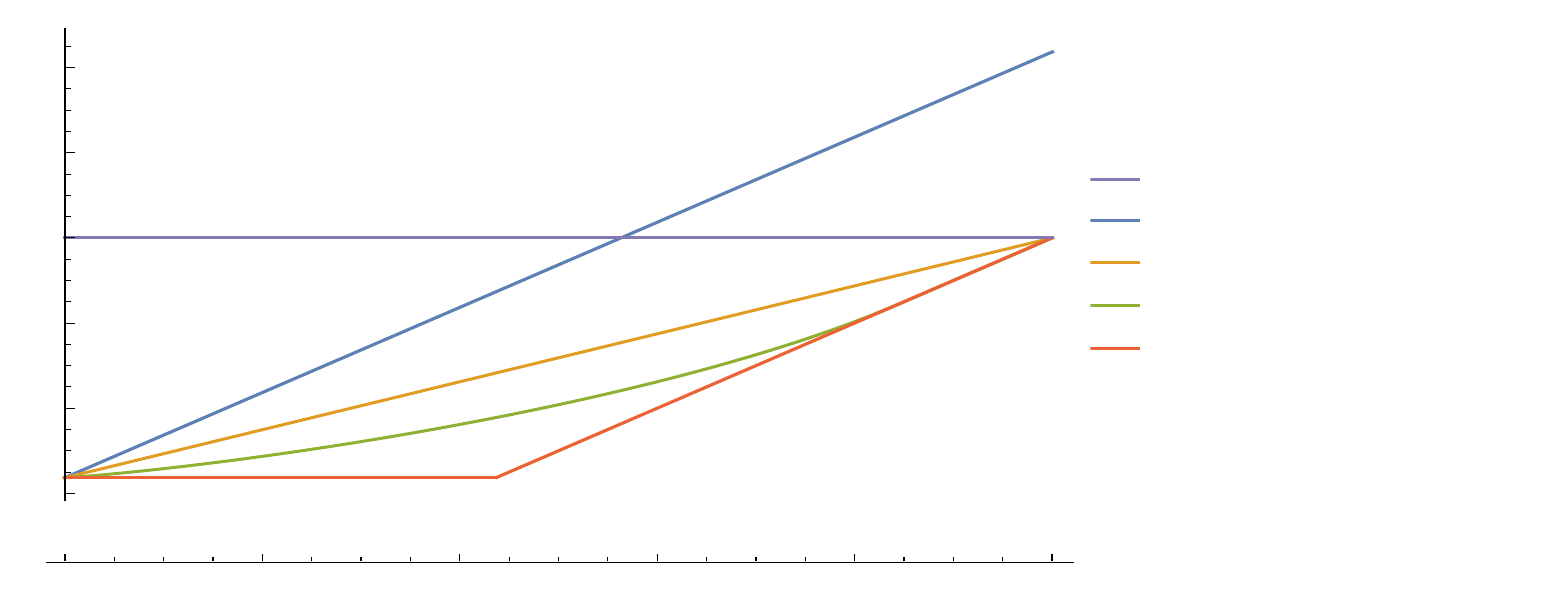
\caption{Overview of the running time guarantees
    of the proposed algorithms ``Adaptive Single-probe'' and 
    ``Adaptive Multi-probe'' for collision probabilities $p_1 = .8$ and 
    $p_2 = .6$ in $d$-dimensional Hamming space such that $\rho \approx .436$. 
    The $x$-axis shows the value of $t$ as a function of $n$, the $y$-axis
    shows the expected work $W$, i.e., the expected number of points the 
    algorithm retrieves from the hash tables.  
    For comparison, we plotted the target time of $O(n^\rho + t)$, the running
    time $O(t n^\rho)$ of a naïve LSH approach, and the running time $O(n)$ of
a linear scan.}
\label{fig:work/logn}
\end{figure}

The natural approach to overcome the difficulties mentioned above is to choose the approximation factor $c$ such that the cost of duplicated points roughly equals the cost of dealing with far points.
For LSH-based algorithms, many papers explain an offline approach of finding the ``optimal'' value of $c$ for a data set \cite{andoni20052,Bawa05,Dong08} which usually envolves sampling or assumptions on the data distribution. 
However, the best value of $c$ depends not only on the data set, but also on the \emph{query}. This situation is depicted in Figure~\ref{fig:example}. In this paper, we provide a query algorithm that adapts to the input and finds a near-optimal $c$ at {\em query time}.  We manage to do this in time proportional to the number of points eventually returned for the optimal parameters, making the search essentially free. 

\paragraph{Output-sensitivity.}
To illustrate the improvement over standard fixed parameter LSH, we propose hard data sets 
for spherical range reporting. 
In these data sets, we pick $t-1$ very close points that show up in almost every repetition, one point at distance $r$, and the remaining points close to distance $cr$.
In this case LSH would need $\Theta(n^\rho)$ repetitions to retrieve the point at distance $r$ with constant probability, where e.g.~$\rho=1/c$ in Hamming space~\cite{IndykM98} and $\rho=1/c^2$ in Euclidean space~\cite{AndoniI06}.
This means that the algorithm considers $\Theta(tn^\rho)$ candidate points, which could be as large as $\Theta(n^{1+\rho})$ for large $t$. In Section~\ref{sec:algorithm} we describe and analyze two algorithms \emph{Adaptive Single-probe} and 
\emph{Adaptive Multi-probe} that mitigate this problem. 
The basic idea is that these algorithms ``notice'' the presence of many close points, and respond by setting $c$ more lenient, allowing for $t$ far points being reported per repetition in addition to the $t$ close points.
This in turn allows doing only $\Theta((n/t)^\rho)$ repetitions, giving a total candidate set of size $\Theta(t(n/t)^\rho)$, which is never larger than $n$. In general, the number of points between distance $r$ and $cr$ have a linear influence in these running times. 
This is made precise in Section~\ref{sec:results}. 

\paragraph{Multi-probing.}
When we stick to the LSH framework, the ideal solution would never consider a candidate set larger than $\Theta(n^\rho + t)$, giving the optimal output sensitive running time achievable by (data independent) LSH.
In order to get closer to this bound, we analyze the \emph{multi-probing approach} for LSH data structures, introduced in \cite{Panigrahy06} and further developed in~\cite{Lv07}.
The idea is that LSH partitions the space in many buckets, but usually only examines the exact bucket in which the query point falls in each repetition. Multi-probing considers all buckets ``sufficiently correlated'' with the query bucket to 
increase the likelihood of finding close points.
To our knowledge,  multi-probing has always been applied in order to save memory by allowing a smaller number of repetitions to be made and trading this for an increase in query time.
Our motivation is different: We want to take advantage of the fact that each of the very close points can only be in one bucket per repetition.
Hence by probing multiple buckets in each repetition, \emph{we not only save memory, but also gain a large improvement in the dependency on~$t$ in our running time}.
We do this by generalizing the adaptive single-probe algorithm to not only find the optimal~$c$ for a query, but also the optimal number of buckets to probe.
As we show in Section~\ref{sec:algorithm}, we are able to do this in time negligible compared to the size of the final candidate set, making it practically free.
The algorithm works for any probing sequence supplied by the user, but in Section~\ref{sec:hamming:multiprobing} we provide a novel probing sequence for Hamming space and show that it always improves the query time compared to the non-multi-probing variant. For certain regimes of $t$, we show that the running time matches the target time $O(n^\rho + t)$. An overview 
of the exact running time statements of the algorithms proposed here with a comparison to standard LSH, a linear scan, and 
the optimal running time for LSH-based algorithms is depicted in Figure~\ref{fig:work/logn}. 
\paragraph{Techniques.}
The proposed data structure is very similar to a standard LSH data structure. While such a data structure usually uses only one particular concatenation length $k$ of hash functions, we build the data structure for all lengths $1, \ldots, k$. At query time, we do an efficient search over the parameter space to find the provably best level. The algorithm then retrieves only the candidates from this level and filters far away points and duplicates.
The reason we are able to do an efficient search over the parameter space is that certain parts of the output size can be estimated very quickly when storing the size of the hash table buckets in the LSH data structure.
For example, when considering very large $c$, though the output may be large, there are only few repetitions to check.
Gradually decreasing $c$, we will eventually have to check so many repetitions that the mere task of iterating through them would be more work than scanning through the smallest candidate set found so far.
Since the number of repetitions for each value of $c$ grows geometrically, it ends up being bounded by the last check, which has size not larger than the returned candidate set.
For multi-probing it turns out that a similar strategy works, even though the search problem is now 
two-dimensional. 

\paragraph{Additional Related Work.} 
Our approach to query-sensitivity generalizes and extends the recent work of Har-Peled and Mahabadi~\cite{Har-PeledM15} which considers approximate near neighbors.  Our method applies to every space and metric supported by the LSH framework while~\cite{Har-PeledM15} is presented for Hamming space.

The proposed single-probing algorithm can be thought of as an adaptive query algorithm on the trie-based LSH forest introduced by Bawa et al. in~\cite{Bawa05} for the related approximate $k$-nearest neighbor problem. (The authors of \cite{Bawa05} make significant assumptions
on the distance distribution of approximate nearest neighbors). The algorithm proposed there always looks at all $n^{f(c)}$ repetitions where $f(c)$ depends on the largest distance $r$ supported by the algorithm and the approximation factor. It collects points traversing tries synchronously in a bottom-up fashion. By looking closer at the guarantees of LSH functions, we show that one can gradually increase the number of repetitions to look at and find the best level to query
directly. We hope that the insights provided here will shed new light on solving approximate nearest neighbors beyond using standard reductions as in \cite{Har-PeledIM12}.

Combining results of very recent papers \cite{laarhoven2015tradeoffs,Christiani16,andoni2016lower} on space/time-tradeoffs make it possible to achieve running times that are similar to our results with respect to multi-probing. We give a detailed 
exposition in Appendix~\ref{sec:appendix2} and plan to include these results in the 
final version.

\rptodo{I think this section can go somewhere else, perhaps to an appendix with a forward pointer from the introduction.}
\matodo{I commented it out for now.}

\section{Preliminaries}\label{sec:prelim}

Let $(X, \dist)$ be a metric space over $X$ with distance function $\dist$.
In this paper, the space usually does not matter; only the multi-probing sequence in Section~\ref{sec:hamming:multiprobing}
is tied to Hamming space. 

\begin{definition}[Spherical Range Reporting, SRR]
   Given a set of points $S \subseteq X$ and a number $r \geq 0$, construct a
   data structure that supports the following queries: Given a point $q \in X$,
   report each point $p \in S$ with $\text{dist}(p, q) \leq r$ with constant probability.
\end{definition}
Note that we consider the exact version of SRR but allow point-wise probabilistic guarantees. 


\begin{definition}[Locality-Sensitive Hash Family, \cite{Charikar02}]
   A \emph{locality-sensitive hash family} $\HH$ is family of functions $h\colon X \to R$, such that for each pair $x, y \in X$ and a random $h\in\HH$, whenever $\text{dist}(q,x)\le\text{dist}(q,y)$ we have $\Pr[h(q)=h(x)]\ge\Pr[h(q)=h(y)]$, 
   for arbitrary $q \in X$.
\end{definition}

Usually the set $R$ is small, like the set $\{0,1\}$.
Often we will concatenate multiple independent hash functions from a family, getting functions $h_k\colon X \to R^k$.  We call this a hash function at level $k$.

Having access to an LSH family $\HH$ allows us to build a data structure with the following properties. 
\begin{theorem}[{\cite[Theorem 3.4]{Har-PeledIM12}}]
   Suppose there exists an LSH family such that
   $\Pr[h(q)=h(x)] \ge p_1$ when $\text{dist}(q,x)\le r$ and
   $\Pr[h(q)=h(x)] \le p_2$ when $\text{dist}(q,x)\ge cr$ with $p_1 > p_2$,
   for some metric space $(X, \dist)$ and some factor $c > 1$.
   Then there exists a data structure such that for a given query~$q$, it returns with constant probability a point within distance $cr$, if there exists a point within distance~$r$.
   The algorithm uses $O(d n + n^{1 + \rho})$ space and $O(n^\rho)$ hash function evaluations per query, where $\rho = \frac{\log 1/p_1}{\log 1/p_2}$.
   \label{thm:lsh}
\end{theorem}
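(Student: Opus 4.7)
My plan is to follow the classical two-level LSH construction: amplify the gap between $p_1$ and $p_2$ by concatenating $k$ base hash functions, and reduce variance by using $L$ independent repetitions. I would set
\[
k = \left\lceil \frac{\log n}{\log(1/p_2)} \right\rceil,
\qquad
L = \lceil n^{\rho} \rceil,
\]
chosen so that (i) for any $y \in S$ with $\dist(q,y) \geq cr$, a single concatenated function $g_i = (h_{i,1},\ldots,h_{i,k})$ satisfies $\Pr[g_i(q)=g_i(y)] \leq p_2^k \leq 1/n$, and (ii) for any $x$ with $\dist(q,x)\leq r$, $\Pr[g_i(q)=g_i(x)] \geq p_1^k = p_2^{k\rho} \geq n^{-\rho} = 1/L$. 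The data structure stores, for each $i \in [L]$, a hash table keyed by $g_i(p)$ for all $p \in S$. This costs $O(nL) = O(n^{1+\rho})$ bucket entries (storing pointers into a single copy of $S$ of size $O(dn)$), matching the claimed space bound.

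The query algorithm computes $g_1(q),\dots,g_L(q)$, which accounts for the $O(n^\rho)$ hash-function evaluations in the statement, and then iterates over the $L$ corresponding buckets, computing the exact distance from $q$ to each retrieved candidate and returning the first one at distance $\leq cr$. To bound the worst-case work, I would terminate as soon as $3L$ \emph{far} candidates (those at distance $> cr$) have been examined. The expected number of far collisions in a single table is at most $n \cdot p_2^k \leq 1$, so the expected total number of far candidates examined over all $L$ tables is at most $L$; by Markov's inequality the abort event occurs with probability at most $1/3$, so the query inspects $O(L)=O(n^\rho)$ candidates in expectation.

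For correctness, assume some $x^\star \in S$ has $\dist(q,x^\star)\leq r$. The probability that $x^\star$ collides with $q$ under at least one $g_i$ is at least $1 - (1-1/L)^L \geq 1 - 1/e$. By a union bound with the abort event, the algorithm returns some point at distance $\leq cr$ with probability at least $(1-1/e) - 1/3 > 0$, which is a positive constant as required; the success probability can then be boosted to any desired constant by running $O(1)$ independent copies.

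The one place that needs some care is the interaction between the $3L$ cutoff and the success probability: the bound on far candidates and the bound on missing $x^\star$ both rely on the randomness of the same $g_1,\dots,g_L$, so one must verify that the union bound above is valid (it is, since both events are defined with respect to that same draw). Everything else — the choice of $k$ and $L$, the space calculation, and the $O(n^\rho)$ evaluation bound — follows directly from plugging the parameters into the definitions of $p_1,p_2$, and $\rho$. The main obstacle in writing this rigorously is simply tracking the constant factors in the cutoff to keep the overall success probability bounded away from zero, rather than any deeper structural issue.
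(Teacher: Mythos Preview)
Your proposal is essentially the same as the paper's proof: same choice of $k$ and $L$, same concatenation-plus-repetition construction, same $3L$ cutoff, and the same correctness argument via the expected number of far collisions. One small slip: with $k=\lceil \log n/\log(1/p_2)\rceil$ you actually get $p_1^k \ge p_1\cdot n^{-\rho}$ (not $\ge n^{-\rho}$, since the ceiling pushes $k$ up), so $L$ needs a constant factor of $1/p_1$---but as you already note, this is just constant tracking and changes nothing structurally.
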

It is essential for understanding our algorithms to know how the above data structure works. For the 
convenience of the reader we  
provide a description of it in Appendix~\ref{sec:proof:lsh}.

In this paper, we state the $\rho$-parameter as used in Theorem~\ref{thm:lsh} as 
a function $\rho(r, c)$ such that $\rho(r, c) = \frac{\log (1/p(r))}{\log (1/p(cr))}$,
where $p(\Delta)$ is the probability that two points at distance $\Delta$ collide. (The 
probability is over the random choice of the LSH function.)
We omit the parameters when their value is clear from the context.

A common technique when working with LSH is multi-probing \cite{Lv07,Panigrahy06,Dong08,Andoni15,Kapralov15}. 
The idea is that often the exact bucket $h_k(q)$ does not have a much higher collision probability with close points than some ``nearby'' bucket $\sigma(h_k(q))$.
Hence we probe multiple buckets in each repetition to reduce the space needed for storing repetitions.
In this paper we are going to show how this approach can give not just space improvements, but also large improvements in query time for SRR.


The LSH framework can easily be extended to solve SRR.
We just report all the points that are in
distance at most $r$ from the query point in the whole candidate set retrieved
from all tables $T_1, \ldots, T_L$ \cite{Andoni09}. For the remainder of this paper, we 
will denote the number of points retrieved in this way by $W$ (``work''). 
It is easy to see that this change to the query algorithm
would already solve SRR with the
guarantees stated in the problem definition. 
However, we will see in Section~\ref{sec:results} that its running time might be as large as $O(n^{1 + \rho})$, worse than a linear scan over the data set.

\section{Data Structure}\label{sec:datastructure}
We extend a standard LSH data structure in the following way.
\matodo{TODO:
Thomas, please make sure this makes sense.
}
\begin{definition}[Multi-level LSH]
Assume we are given a set $S \subseteq X$ of $n$ points, two parameters~$r$ and~$L$, and access to an 
LSH family $\HH$ that maps from $X$ to $R$. Let 
$\text{reps}(k)=\lceil p_1^{-k}\rceil$ where
$p_1$ is the probability that points at distance $r$ collide under random choice of $h \in \HH$. 
Let $K$ be the largest integer $k$ such that $\text{reps}(k) \leq L$. A
\emph{Multi-level LSH data structure} for $S$ is set up in the following way: 
For each $k \in \{0, \ldots, K\}$ choose functions $g_{k, i}$ for $1 \leq i \leq \text{reps}(k)$ 
from $\HH$ independently at random.  
Then, for each $k \in \{0, \ldots, K\}$, build $\text{reps}(k)$ hash tables
$T_{k, i}$ with $1 \leq i \leq \text{reps}(k)$. 
For a fixed pair $k \in \{0, \ldots, K\}$ and $i \in \{1, \ldots,
\text{reps}(k)\}$, and each $x \in X$,  concatenate hash values $(g_{1, i}(x),
\ldots, g_{k, i}(x)) \in R^k$
to obtain the hash code $h_{k, i}(x)$. Store references to all points in $S$
in table $T_{k, i}$ by applying $h_{k, i}(x)$. 
For a point $x \in X$, and for integers $0 \leq k \leq K$ and $1 \leq i \leq
\text{reps}(k)$, 
we let $|T_{k, i}(x)|$ be the number of points in bucket
$h_{k, i}(x)$ in table~$T_{k, i}$. We assume this value can be retrieved in constant time.
\label{def:multi:level:lsh}
\end{definition}
In contrast to a standard LSH data structure, we only accept the number of
repetitions as an additional parameter.  
The value $K$ is chosen such that the number of repetitions available suffices to obtain a close point at distance $r$ with constant probability, cf.~Appendix~\ref{sec:proof:lsh}.
This is ensured by the repetition count for all levels $0, \ldots, K$.
The space usage of our data structure is $O(n\sum_{0 \leq k \leq K}p_1^{-k}) =
O(n p_1^{-K)}) = O(nL)$.
Hence multiple levels only add a constant overhead
to the space consumption compared to a standard LSH data structure for level
$K$. Figure~\ref{fig:datastructure} provides a visualization of the data
structure.

\begin{figure}[t]
\begin{tikzpicture}[scale=0.9]
\draw[draw, rounded corners] (0,0) rectangle (1,1);
\draw[draw, rounded corners] (0,-1.5) rectangle (2,-0.5);
\draw[draw, rounded corners] (0,-3) rectangle (4,-2);
\node[] at (7.5, -6.5) {\includegraphics[height=0.90cm]{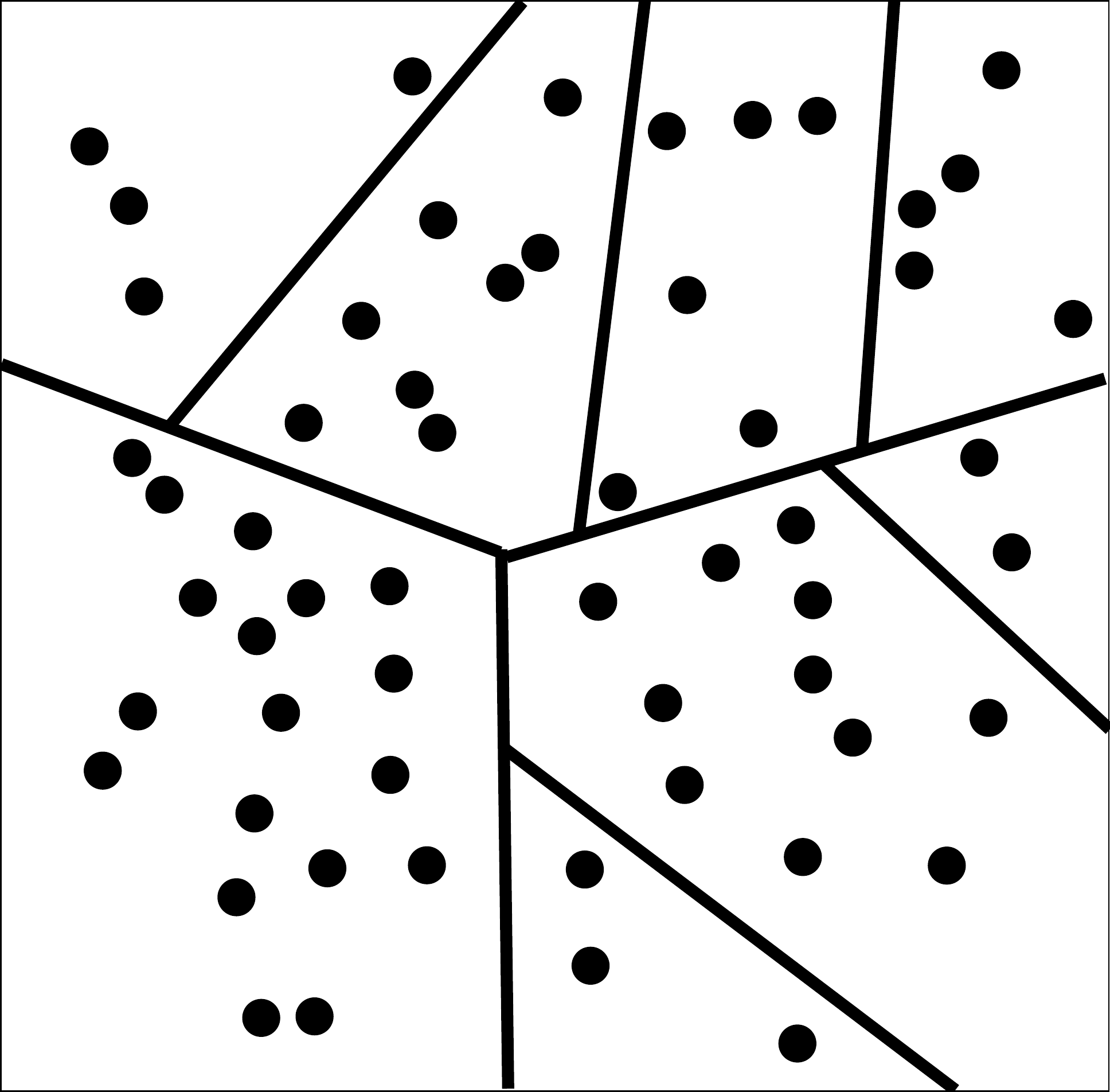}};
\draw[draw, rounded corners] (0,-7) rectangle (15,-6);
\draw[draw] (7, -7) to (7, -6);
\draw[draw] (8, -7) to (8, -6);
\node[] at (7.5, -5.6) {$T_{K, i}$};

\draw[draw] (1, -1.5) to (1, -0.5);
\draw[draw] (1, -3) to (1, -2);
\draw[draw] (2, -3) to (2, -2);
\draw[draw] (3, -3) to (3, -2);

\node[] at (11.5, 1.5) {Pointset $S$};

\node[] at (11.5, 0) {\includegraphics[width=2cm]{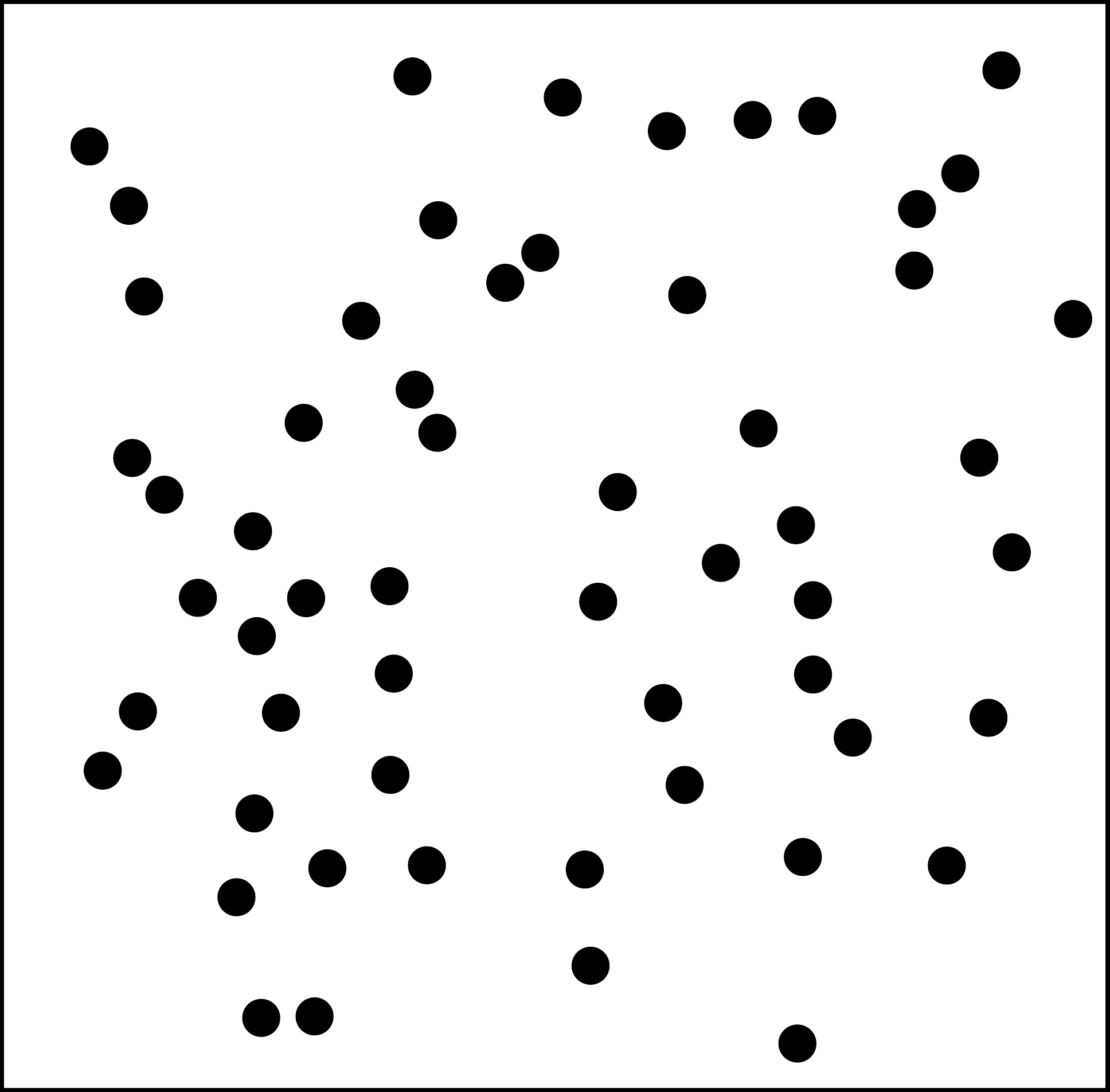}};

\draw [->] (10.3, 0) to[out=170, in=20] (1.1, 1); 
\draw [->] (10.3, -.1) to[out=170, in=20] (0.5, -0.45); 
\draw [->] (10.3, -.2) to[out=170, in=20] (1.5, -0.45); 
\draw [->] (10.3, -1) to[out=190, in=20] (7.8, -5.9); 

\node[] at (6, 1.5) {$h_{0,1}$};
\node[] at (5, 0.7) {$h_{1,1}$};
\node[] at (4.5, -0.1) {$h_{1,2}$};
\node[] at (9.8, -3) {$h_{K,i}$};

\foreach \x in {1,...,5}{
    \draw[draw] (\x, -7) to (\x, -6);
}

\node[] at (10.2, -.5) {\tiny{$\vdots$}};

\draw[draw] (13.8,-7) to (13.8, -6);

\node[] at (0.5,0.5) {$T_{0, 1}$};
\node[] at (0.5,-1) {$T_{1, 1}$};
\node[] at (1.5,-1) {$T_{1, 2}$};
\node[] at (0.5,-2.5) {$T_{2, 1}$};
\node[] at (1.5,-2.5) {$T_{2, 2}$};
\node[] at (2.5,-2.5) {$T_{2, 3}$};
\node[] at (3.5,-2.5) {$T_{2, 4}$};

\node[] at (-1, 1.5) {Level};
\node[] at (-1, 0.5) {$0$};
\node[] at (-1, -1) {$1$};
\node[] at (-1, -2.5) {$2$};
\node[] at (-1, -4) {$\vdots$};
\node[] at (-1, -6.5) {$K$};

\node[] at (16, 1.5) {Rep.};
\node[] at (16, 0.5) {${p^{-0}}$};
\node[] at (16, -1) {${p^{-1}}$};
\node[] at (16, -2.5) {${p^{-2}}$};
\node[] at (16, -6.5) {${p^{-K}}$};

\node[] at (2, -4) {$\vdots$};
\node[] at (5.5, -4) {$\ddots$};

\foreach \x in {1,...,5}{
    \node at (\x - 0.5, -6.5) {$T_{K, \x}$};
}
\footnotesize{\node at (14.4, -6.5) {$T_{K, p^{-K}}$};}

\end{tikzpicture}
\caption{Overview of the multi-level LSH data structure with tables $T_{k, i}$
and hash functions $h_{k, i}$ splitting a data set $S$. The data structure 
is set up for levels $0, ..., K$
with repetition count ``Rep.'' for collision probability $p$ (avoiding ceilings).
Example for a space partition of $S$ induced by hash function $h_{K, i}$
is explicitly depicted as the content of table $T_{K, i}$ where each class is a
bucket.}
\label{fig:datastructure}
\end{figure}
We describe an alternative tree-based data structure that trades 
query time for space consumption in Appendix~\ref{sec:trie}.
We remark that some of the proposed query algorithms will require a slightly higher repetition
count.
In such cases, the function $\text{reps}$ will be redefined.
The additional repetition count will never add more than a polylog overhead to the running time.

\section{Standard LSH, Local Expansion, and Probing the Right Level}\label{sec:results}
In this section we show that using a standard LSH approach might 
yield running time
$\Omega(t n^\rho)$ when standard parameter settings such as the ones from
Theorem~\ref{thm:lsh} are used to solve SRR. Then, we define a measure
for the difficulty of a query. Finally, we 
show that if the output size and this measure is known, 
inspecting a certain level in the multi-level LSH data structure gives 
output- and query-sensitive expected running times.

\begin{observation}
    Suppose we want to solve SRR in $(X, \dist)$ using LSH with parameters as in Theorem~\ref{thm:lsh} with LSH family $\HH$.
    Let $q \in X$ be a fixed query point. Then there exist data sets $S
    \subseteq X$ with $|S| = n$ such that the expected number of points retrieved from the hash
    tables on query $q$ in the LSH data structure built for $S$ is $\Omega(t n^\rho)$.
\end{observation}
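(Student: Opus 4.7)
The plan is to exhibit an explicit hard instance on which the standard LSH data structure of Theorem~\ref{thm:lsh} is forced to scan $\Omega(tn^\rho)$ bucket entries, regardless of how the random hash functions turn out.

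\textbf{The hard instance.} Fix the query point $q\in X$ and pick any point $p^*$ with $\dist(q,p^*)=r$. Let $S$ consist of:
\begin{itemize}
  \item $t-1$ copies of $q$ itself (points at distance $0$ from $q$),
  \item the single point $p^*$ at distance exactly $r$, and
  \item $n-t$ filler points placed at distance $\geq cr$ from $q$.
\end{itemize}
Thus all $t$ of the close points (the $t-1$ duplicates and $p^*$) satisfy $\dist(q,\cdot)\le r$, which is consistent with the advertised output size.

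\textbf{Parameter recap.} Under the setting of Theorem~\ref{thm:lsh}, the data structure concatenates $k$ hash functions per table and uses $L$ independent tables, where $k=\Theta(\log n/\log(1/p_2))$ and $L=\Theta((1/p_1)^k)=\Theta(n^\rho)$. These values are fixed at construction time and depend only on $n$, $p_1$, $p_2$, not on the particular query.

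\textbf{Counting collisions.} For any $h\in \HH$ and any $x\in X$ with $\dist(q,x)=0$, locality-sensitivity gives $\Pr[h(q)=h(x)]\ge \Pr[h(q)=h(q)]=1$, so $h(q)=h(x)$ with probability $1$. Concatenating $k$ independent functions and using independence across the $L$ tables, every one of the $t-1$ duplicates of $q$ ends up in bucket $h_{k,i}(q)$ of every table $T_i$, $i=1,\dots,L$. Hence the candidate set retrieved across the $L$ tables contains each duplicate $L$ times, contributing $(t-1)L=\Omega(t n^\rho)$ entries in total (counted with multiplicity, which is what ``points retrieved from the hash tables'' measures: the algorithm must at minimum read these entries to test them against the radius).

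\textbf{Conclusion.} Adding the (nonnegative) contribution from $p^*$ and from the far points yields an expected work of $\Omega(tn^\rho)$, establishing the claim. The only mild obstacle is picking the filler points so they really lie at distance $\ge cr$; this is immediate in any reasonable metric space of interest (e.g., in Hamming space simply take points that differ from $q$ in all coordinates, and in Euclidean space take points on a sphere of radius $cr$). Nothing about the analysis depends on how the $L$ repetitions or the $k$ concatenated hash functions are chosen, only on the defining collision property at distance $0$.
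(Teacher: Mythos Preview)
Your argument follows the same idea as the paper's: plant $t-1$ points so close to $q$ that they collide in essentially every table, put one point at distance exactly $r$ to force the full $L=\Theta(n^\rho)$ repetitions, and fill up with far points. The paper's proof in Appendix~\ref{sec:worstcase} proceeds identically.

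There is, however, a small but genuine technical gap in your construction. The statement requires a data set $S\subseteq X$ with $|S|=n$, i.e., $n$ \emph{distinct} points of $X$. In a metric space, $\dist(q,x)=0$ forces $x=q$, so your ``$t-1$ copies of $q$'' are all the same element of $X$, and $S$ collapses to a set of size $n-t+2$. The paper avoids this by placing the $t-1$ near points at distance $\varepsilon>0$, with $\varepsilon$ chosen small enough that even after concatenating $k=\lceil\log n/\log(1/p_2)\rceil$ hash functions the collision probability with $q$ stays above a fixed constant (say $0.01$); this is possible as soon as the collision probability $p(\Delta)$ is continuous at $\Delta=0$, and in discrete spaces such as $\{0,1\}^d$ one instead enumerates points from distance $1,2,\dots$ until $t-1$ of them are collected. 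With that modification each near point still appears in an expected $\Omega(L)$ buckets, and the rest of your counting goes through verbatim.
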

The basic idea of the proof is to inspect  
difficult data sets in which there are $t - 1$ very close points
to the query, one point at distance $r$, and all the other points at distance
close to $cr$. Details are deferred to
Appendix~\ref{sec:worstcase}.

For a set $S$ of points, a point $q$, and a number $r > 0$, let $N_r(q)$ be the number of points in $S$ at distance at most $r$ from $q$.
We next define the \emph{expansion} at a query point $q$ for a given distance.
The expansion measures how far we can increase the radius of an $r$-sphere around the query point before the number of points covered increases above some constant factor.
This dimensionality measure is central in the running time analysis of our proposed algorithms.
\begin{definition}[Expansion]
    Let $r > 0$, $q \in X$ and $S \subseteq X$ be a set of points.
    The \emph{expansion} $c^\ast_{q,r}$ at point $q$ is the largest number $c$ such that $N_{cr}(q) \leq 2 N_r(q)$, where
    $c^\ast_{q,r}$ is $\infty$ if $N_r(q) \geq n/2$.
\end{definition}
We will often simply write $c^*_q$, when $r$ is known in the context. A visualization for the expansion around a query is
depicted in Figure~\ref{fig:example}.

\subsection{Query Algorithms If $t$ and $c^\ast_q$ are Known}
\label{sec:static:analysis}
\begin{theorem}
    Let $r > 0$ and $c \geq 1$. Let $S$ be a set of $n$ points and let DS be the Multi-level LSH data structure obtained from preprocessing $S$ with $L = \Omega(n^{\rho(r,c)})$.
    Given a query point $q$, let $t = N_r(q)$ and $c_q^\ast$ be the expansion around $q$ in $S$.
    There exists a query algorithm on DS to solve SRR with the following properties:
    \begin{enumerate}
        \item[(i)] If $c_q^\ast \geq c$, the algorithm has expected 
            running time $O(t (n/t)^{\rho(r, c_q^\ast)})$. 
        \item[(ii)] Otherwise, the algorithm has expected running time
            $O(t(n/t)^{\rho(r,c)} + N_{cr}(q))$. 
    \end{enumerate}
    For $t=0$, the running time is the same as $t=1$.
    \label{thm:overview:simple}
\end{theorem}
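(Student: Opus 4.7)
The plan is to reduce the theorem to a careful choice of which level $k^\star$ of the multi-level data structure to probe, followed by a standard candidate-set analysis stratified by distance. Let $p(\cdot)$ denote the collision probability of a single hash from $\HH$, $p_1=p(r)$, and write $\rho^\star = \rho(r,c_q^\ast)$ (case (i)) or $\rho(r,c)$ (case (ii)). In both cases, the algorithm does only one thing: it picks the smallest integer $k^\star$ with $\text{reps}(k^\star) \ge \lceil(n/\max(t,1))^{\rho^\star}\rceil$, probes the bucket $h_{k^\star,i}(q)$ in every repetition $i \le \text{reps}(k^\star)$, and filters the retrieved set to those points within distance $r$ from $q$. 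The first sanity check is that such $k^\star$ exists below the overall cap $K$: since $c_q^\ast \ge c$ in case (i), we have $\rho^\star \le \rho(r,c)$, so the required number of repetitions is at most $n^{\rho(r,c)} = O(L)$, and the assumption $L = \Omega(n^{\rho(r,c)})$ covers it. In case (ii) the same bound trivially holds with $\rho^\star = \rho(r,c)$.

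Next I would verify correctness. For any point $x$ with $\text{dist}(q,x) \le r$, a single repetition at level $k^\star$ has $\Pr[h_{k^\star,i}(q) = h_{k^\star,i}(x)] \ge p_1^{k^\star}$, and by our choice $\text{reps}(k^\star) \ge p_1^{-k^\star}$, so the probability that $x$ is missed in all repetitions is at most $(1-p_1^{k^\star})^{p_1^{-k^\star}} \le 1/e$. Hence each close point is reported with constant probability, which is exactly what SRR demands. (The case $t=0$ is handled by running with $t=1$; this only affects $k^\star$ up to an additive constant since $\log(n/1) = \log n$.)

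The heart of the proof is bounding the expected work $W$, i.e., the size of the retrieved candidate set, which dominates the running time since hash evaluations and filtering are linear in the number of retrieved items. Writing $L^\star = \text{reps}(k^\star)$ I would split $S$ by distance to $q$ and use $\E[W] = L^\star \sum_{x \in S} p(\text{dist}(q,x))^{k^\star}$. The strata are:
\begin{enumerate}
\item Points with $\text{dist}(q,x) \le r$. There are at most $t$ of them and each contributes at most $L^\star \cdot 1$ to $\E[W]$, giving $O(tL^\star) = O(t(n/t)^{\rho^\star})$ by our choice of $k^\star$.
\item Points with $r < \text{dist}(q,x) \le c_q^\ast r$ (case (i)) or $\le cr$ (case (ii)). By monotonicity of $p$ their per-repetition collision probability is $\le p_1^{k^\star}$, so each contributes at most $L^\star p_1^{k^\star} \le 1$ to $\E[W]$. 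The expansion definition bounds their number by $N_{c_q^\ast r}(q)-t \le t$ in case (i); in case (ii) the bound is simply $N_{cr}(q)$.
\item Points with $\text{dist}(q,x) > c_q^\ast r$ (case (i)) or $> cr$ (case (ii)). Their per-repetition probability is at most $p_2^{\star k^\star}$, where $p_2^\star = p(c_q^\ast r)$ or $p(cr)$. Plugging in $k^\star = \rho^\star\log(n/t)/\log(1/p_1)$ and using $\rho^\star = \log(1/p_1)/\log(1/p_2^\star)$ gives $p_2^{\star k^\star} = (t/n)$, so their total contribution is at most $n \cdot L^\star \cdot (t/n) = O(t(n/t)^{\rho^\star})$.
\end{enumerate}
Summing the three strata yields exactly $O(t(n/t)^{\rho(r,c_q^\ast)})$ in case (i) and $O(t(n/t)^{\rho(r,c)} + N_{cr}(q))$ in case (ii).

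The only genuine subtlety I expect is handling super-close points whose individual collision probability is close to $1$: their contribution to $W$ is as large as $L^\star$ apiece. This is what forces the factor of $t$ in $tL^\star$ and is precisely why the output-sensitive bound is $t(n/t)^{\rho^\star}$ rather than $(n/t)^{\rho^\star}$. Once $k^\star$ is calibrated against $n/t$ (not $n$) this balances the close-point, medium, and far contributions and the stated running time drops out. Everything else is bookkeeping around ceilings in $\text{reps}(k)$ and constant factors from the geometric growth of repetition counts between adjacent levels.
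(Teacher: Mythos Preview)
Your proposal is correct and follows essentially the same approach as the paper: both arguments pick the level $k^\star \approx \log(n/t)/\log(1/p_2^\star)$ (equivalently, the smallest level with $\text{reps}(k)\ge (n/t)^{\rho^\star}$), verify that $L=\Omega(n^{\rho(r,c)})$ suffices because $\rho^\star\le\rho(r,c)$, and then bound the expected candidate set by stratifying $S$ into close ($\le r$), intermediate ($\le c_q^\ast r$ or $\le cr$), and far points. One tiny slip: in stratum~2 you write $L^\star p_1^{k^\star}\le 1$, but since $L^\star=\lceil p_1^{-k^\star}\rceil$ this quantity is $\ge 1$; the correct statement is that it is $O(1)$, which is all you need for the bound to go through.
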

\begin{proof}
    Let $p_1$ and $p_2$ be the probabilities that the query point collides with points at distance $r$ and $c_q^\ast r$, respectively, given the LSH family used.
    We consider statement (i) first.
    Look at level $k = \lceil \frac{\log (n/t)}{\log (1/p_2)} \rceil$ in DS such that $p_2^k\le t/n$ and $p_1^{-k}=\Theta((n/t)^{\rho(r, c^\ast_q)})$.
    Since $c^*_q\ge c$, we can assume $p_1^{-k}\le L$, so that we may inspect that many repetitions and guarantee constant collision probability with a close points.
    The total expected number of collisions is then not more than
  $p_1^{-k}(t + N_{c_q^*r}(q) p_1^k + n p_2^k)$.
 By the choice of $k$, $np_2^k \le t$ and so this is $O(tp_1^{-k} + N_{c_q^*}(q))$.
 By the definition of $c^*_q$, $N_{c_q^*r}(q) = O(t)$ and so this term is dominated by the former.
 Finally looking at every bucket takes time $O(p_1^{-k})$, but this is likewise dominated if $t\ge 1$.
 Statement (ii) follows by the same line of reasoning, simply using $c$ instead of $c^*_q$. Since this value 
 of $c$ does not have the expansion property, the term $N_{cr}(q)$ is present in the running time.
\end{proof}
As can be seen from the theorem statement, our running time bounds might depend on the number of points at
distance at most $cr$. This happens when the expansion around the query is
smaller than the $c$ value that can be read off from the number of repetitions
and the LSH hash family at hand. The influence of these points is however only
linear in their number. 
The result basically shows that there exists a single level of the multi-level LSH data structure that we want to probe when 
$t$ and the expansion around the query is known.

\section{Adaptive Query Algorithms}\label{sec:algorithm}
In this section we describe a query algorithm that obtains the results from Theorem~\ref{thm:overview:simple} without knowing $t$ or the expansion around the query.
It turns out that we can get something even better, i.e., a running time equivalent to knowing the entire distribution of distances $\text{dist}(q,x)$ from the query point $q$ to data points $x$ in the data set.

We work on a multi-level LSH data structure, DS, set up for $S\subseteq X$ with tables $T_{k, i}$.
DS is assumed to have been built with $L$ repetitions and $K$ levels, see Definition~\ref{def:multi:level:lsh}.
Now, a query algorithm looking at the buckets at level $k$, would be expected to take time
\begin{align}
   \E[W_k]
      &= p_1^{-k} (O(1) + \sum_{x \in S}\Pr[h_k(q)=h_k(x)]).
\end{align}
This accounts for the $p_1^{-k}$ repetitions to be made for correctness (cf.~Appendix~\ref{sec:proof:lsh}), $O(1)$ time for probing and evaluating the hash functions, and, by linearity of expectation, the number $\sum_{x \in S}\Pr[h_k(q)=h_k(x)]$ of expected collisions and thus points retrieved from the hash tables.
For ease of presentation, we omit ceilings for repetition counts $p_1^{-k}$ and note that adding a constant to $k$ never changes the results by more than a constant factor.

The function $\E[W_k]$ over $k$ will have an optimum in $[0,K]$.
It turns out that we can indeed find this optimum and get expected running time close to:
\begin{align}
   W_\text{single}
   &=\min_{0 \le k \le K} \E[W_k]
  \label{eq:w:single}.
\end{align}

Since this describes the best expected running time given knowledge of the distance distribution \emph{at the query}, we note that the quantity is always upper bounded by running times stated in
Theorem~\ref{thm:overview:simple}. (Given that the number of repetitions is sufficiently high as stated there.)
However, in many important cases it can be much smaller than that!
In Appendix~\ref{sec:examples:w:single} we calculate $W_\text{single}$ for different data distributions, including ``locally growth-restricted'' as considered in~\cite{datar2004locality}.
In this case it turns out that $W_\text{single} = O(\log n)$, an exponential improvement over ``standard'' query time $O(n^\rho)$.

The query algorithm is given as Algorithm~\ref{algo:adaptive:query} and works as follows:
For each level $0 \leq k \leq K$, calculate the work of doing $\text{reps}(k) = p_1^{-k} (2\log 2k)$ repetitions by summing up all bucket counts.
(The $2\log 2k$ factor is a technical detail explained in the proof below.)
Terminate as soon as the optimal level has been provably found, which may be one that we have considered in the past, and report
all close points in the candidate set. This decision is based on whether the number of buckets to look at alone is larger than 
the smallest candidate set found so far or not.

\begin{algorithm}[t]
    \caption{Adaptive-Single-Probe($q$, $p_1$, $T$)}\samepage\label{algo:adaptive:query}
    \begin{algorithmic}[1]
   \State $k \gets 1, k_{\text{best}} \gets 0, w_{k_{\text{best}}} \gets n;$ 
   \While{$\text{reps}(k) \le \min(L, w_{k_{\text{best}}})$}
   \State $w_k \gets \sum_{i = 1}^{\text{reps}(k)} (1 + |T_{k,i}(q)|)$;
      \If{$w_k < w_{k_{\text{best}}}$}
         \State $k_{\text{best}} \gets k; w_{k_{\text{best}}} \gets w_k$;
      \EndIf
      \State $k \gets k + 1$;
   \EndWhile
   \State\Return $C \gets \bigcup_{i = 1}^{\text{reps}(k_{\text{best}})} \{x \in T_{k_{\text{best}}, i}(q) \mid \text{dist}(x,q)\le r\}$
    \end{algorithmic}
\end{algorithm}
\begin{theorem}
    Let $S \subseteq X$ with $|S| = n$ and $r$ be given.
    Then Algorithm~\ref{algo:adaptive:query} on DS solves SRR with constant probability.
    The expected running time of the \textbf{while}-loop in Lines~(2)--(6)
    and the expected number of distance computations in Line~(7) is
    $O(W_\text{single} \log\log W_\text{single})$.
    \label{thm:adaptive:time}
\end{theorem}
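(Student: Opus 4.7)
The proof has two parts: correctness and the running-time bound.

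\textbf{Correctness.} Fix a point $x$ with $\dist(q,x)\le r$ and a level $k\in\{1,\ldots,K\}$. The $\text{reps}(k) = \lceil p_1^{-k}\rceil \cdot 2\log(2k)$ independent hash functions each collide $q$ with $x$ with probability at least $p_1^k$, so $x$ is absent from every bucket $T_{k,i}(q)$ with probability at most $(1-p_1^k)^{\text{reps}(k)} \le e^{-2\log(2k)} = 1/(2k)^2$. Because the algorithm returns the candidate set at some level $k_{\text{best}}\in\{0,\ldots,K\}$, a union bound across levels gives $\Pr[x\notin C]\le \sum_{k\ge 1}1/(2k)^2<1/2$. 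The $2\log(2k)$ slack in $\text{reps}(k)$ is calibrated precisely so this telescoping sum converges.

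\textbf{Decomposing the runtime.} The key observation is that computing $w_k$ in the while loop costs only $O(\text{reps}(k))$ time---iterating through $\text{reps}(k)$ repetitions with $O(1)$ bucket-size lookups each---even though the value $w_k$ itself represents the total bucket occupancy at level $k$ and can be much larger. Summing the while-loop iteration cost over all $\tau=k_{\text{last}}$ examined levels yields a geometric series bounded by $O(\text{reps}(\tau))$. The stopping rule guarantees $\text{reps}(\tau)\le w_{k_{\text{best}}}$ (or the loop stopped because $\text{reps}(\tau+1)>L$, in which case the bound on $L$ still leaves us within the same budget). The Line-7 distance computations visit at most $w_{k_{\text{best}}}$ points---exactly the candidate set at the chosen level. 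Thus the total work in the while loop plus Line~7 is $O(w_{k_{\text{best}}})$.

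\textbf{Bounding $\E[w_{k_{\text{best}}}]$.} Let $k^{\ast} = \argmin_{0\le k\le K}\E[W_k]$ and $W^{\ast} = W_{\text{single}}$. Since $\E[W_k]\ge \Omega(p_1^{-k})$ (from the $O(1)$ term accounting for repetitions), we have $p_1^{-k^{\ast}}=O(W^{\ast})$ and hence $\log k^{\ast}=O(\log\log W^{\ast})$. The expected work at level $k^{\ast}$ is
\[
  \E[w_{k^{\ast}}] = \text{reps}(k^{\ast})\Bigl(1+\sum_{x\in S}\Pr[h_{k^{\ast}}(q)=h_{k^{\ast}}(x)]\Bigr) = 2\log(2k^{\ast})\cdot\E[W_{k^{\ast}}] = O(\log\log W^{\ast})\cdot W^{\ast}.
\]
If the algorithm reaches level $k^{\ast}$ (i.e., $\tau\ge k^{\ast}$), then $w_{k_{\text{best}}}\le w_{k^{\ast}}$ by definition of $k_{\text{best}}$; otherwise, the stopping rule gives $w_{k_{\text{best}}}<\text{reps}(\tau+1)\le\text{reps}(k^{\ast})\le \E[w_{k^{\ast}}]$. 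Either way, $\E[w_{k_{\text{best}}}]=O(\E[w_{k^{\ast}}])=O(W^{\ast}\log\log W^{\ast})$, which completes the proof.

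\textbf{Main obstacle.} The principal conceptual hurdle is the separation between $w_k$ (which measures bucket occupancy, i.e., the eventual retrieval cost at that level) and the $O(\text{reps}(k))$ cost of \emph{computing} $w_k$ in the loop via constant-time bucket-size queries. Without this distinction, bounding the total work in the while loop would seem to require summing $\sum_{k\le\tau}\E[w_k]$, which can far exceed $W^{\ast}\log\log W^{\ast}$ because $\E[W_k]$ need not be close to $W^{\ast}$ for $k\ne k^{\ast}$. The geometric growth of $\text{reps}(\cdot)$ combined with the stopping rule is what collapses both the loop-iteration cost and the Line-7 cost to the single quantity $w_{k_{\text{best}}}$, which is then easily compared to $w_{k^{\ast}}$ in expectation.
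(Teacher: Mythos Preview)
Your proof is correct and follows essentially the same approach as the paper: correctness via the $2\log(2k)$ overcount plus a union bound over levels, the while-loop cost as a geometric series dominated by its last term which the stopping rule ties to $w_{k_{\text{best}}}$, and finally $\E[w_{k_{\text{best}}}]\le\E[w_{k^\ast}]=O(W_{\text{single}}\log\log W_{\text{single}})$. One cosmetic point: your case analysis on whether the loop reaches $k^\ast$ is equivalent to the paper's one-line inequality $\E[\min_k w_k]\le\min_k\E[w_k]$; in your ``otherwise'' branch you can more cleanly write $w_{k_{\text{best}}}<\text{reps}(k^\ast)\le w_{k^\ast}$ pointwise (since $w_{k^\ast}\ge\text{reps}(k^\ast)$ always), which avoids mixing a random variable with an expectation mid-chain and makes the pointwise bound $w_{k_{\text{best}}}\le w_{k^\ast}$ immediate in both cases.
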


\emph{Proof.}
    First we show that the algorithm works correctly, then we argue about its running time.

    For correctness, let $y \in S$ be a point with $\text{dist}(y, q) \leq r$.
    At each level $k$, we see $y$ in a fixed bucket with probability at least $p_1^k$. With $\text{reps}(k)$ repetitions,
    the probability of finding $y$ is at least $1-(1-p_1)^{\text{reps}(k)} \ge 1 - 1/(2k)^2$.
    By a union bound over the $K$ levels of the data structure, $y$ is present on every level with probability at least $1-\sum_{k=1}^\infty 1/(2k)^2 \ge 1/2$, which shows correctness. \matodo{I removed a line here.}

    Now we consider the running time. The work inside the loop is dominated by Line~(3) which takes time $O(\text{reps}(k))$, given constant access to the size of the buckets.
    Say the last value $k$ before the loop terminates is $k^\ast$, then the loop takes time
    $\sum_{k = 1}^{k^*} O(\log k \cdot p_1^{-k})
    \le \log k^* \cdot p_1^{-k^*} \sum_{k = 0}^{\infty} O(p_1^k)
    = O(\log k^* \cdot p_1^{-k^*}) = O(w_{k_{\text{best}}})$, where the last equality is by the loop condition, $\text{reps}(k^\ast)\le w_{k_\text{best}}$.

    In Line~7, the algorithm looks at $w_{k_\text{best}}$ points and buckets.
    Hence the total expected work is
    \begin{align*}
       \E(w_{k_{\text{best}}})
       &= \E\left[\min_{0 \leq k \leq K} w_k\right]
       &&\le \min_{0 \leq k \leq K} \E[w_k] &&\text{by Jensen's inequality}
       \\&=\min_{0\le k\le K} \log k \cdot \E[W_k]
       &&\le \log k' \cdot \E[W_{k'}] &&\text{where $k'=\argmin_{0\le k\le K}\E[W_k]$}
       \\&= O(W_\text{single}\log\log W_\text{single}) &&&&\text{as $p_1^{-k'} \le W_\text{single}$.\tag*{\qed}}
    \end{align*}

\subsection{A Multi-probing Version of Algorithm~\ref{algo:adaptive:query}}\label{sec:algorithm:multiprobing}

We extend the algorithm from the previous subsection to take advantage of multi-probing.
For a particular hash function $h_k$, distance $r$ and value $\lambda\ge1$, we define
    a \emph{probing sequence} $\sigma = (\sigma_{k,\ell})_{1 \leq \ell \leq \lambda}$ as a sequence of functions $R^k\to R^k$.
Now when we would probe bucket $T_{k, i}(h_k(q))$, we instead probe $T_{k, i}(\sigma_{k,1}h_k(q)), \dots, T_{k, i}(\sigma_{k,\lambda}h_k(q))$.
(Where $\sigma_{k,1}$ will usually be the identity function.)

For a point $y$ at distance $r$ from $q$, we will be interested in the event $[\sigma_{k,\ell}h_k(q)=h_k(y)]$. The probability that this event occurs is donated by $p_{k,\ell}$. 
If $p_{k,1} \ge p_{k,2} \ge \dots$, we say that the probing sequence is \emph{reasonable}.
The intuition is that we probe buckets in order of expected collisions. In particular, by disjointness of the events, the probability of a collision within the first $\ell$ probes is exactly $P_{k, \ell} = p_{k, 1} + \dots + p_{k, \ell}$.
Hence doing $\ell$ probes per repetition, we need about $1/P_{k,\ell}$ repetitions to obtain
constant probability of finding $y$.
In practice, probing sequences are usually not reasonable \cite{Andoni15}, but as long as $p_{k,\ell}$ is known, they can be sorted in advance. 

To state the complexity of our algorithm, we generalize the quantity $W_\text{single}$ from \eqref{eq:w:single} in the natural way to include multi-probing.
As in the case of probing a single bucket, $W_\text{multi}$ denotes the minimal work one would expect to need for an LSH based approach that knows the optimal values of $k$ and $\ell$.
\begin{align}
   W_\text{multi}
   &=\min_{0 \leq k \leq K, 1 \leq \ell} \left[
   \frac1{P_{k, \ell}}
\left(\ell + \sum_{x \in S,1 \leq i\le \ell} \Pr[\sigma_{k, i}h(q) = h(x)]\right)\right]
      \label{eq:w:multi}
\end{align}

As in Algorithm~\ref{algo:adaptive:query}, we carefully explore the now two-dimensional and infinite space $[0,K]\times[1,\infty]$ of parameters.
Consider the probability of finding some point $y$ at distance $r$ to our query.
Say we choose values $(k,\ell)$ and make $\text{reps}(k, \ell)$ repetitions; we would then probe buckets
$\bigcup_{i=1}^{\text{reps}(k,\ell)}\bigcup_{j=1}^{\ell} T_{k, i}[\sigma_{k,j}h_k(q)]$ and find $y$ with probability $1 - (1-P_{k,\ell})^{\text{reps}(k,\ell)}$.
As in the single-probing algorithm, we have to be careful about dependencies and 
set 
   $\text{reps}(k, \ell) = (2 \log(2\ell k))/P_{k, \ell}$
such that the probability of not finding $y$ is less than $\exp(-\text{reps}(k,\ell)/P_{k,\ell}) = (2\ell k)^{-2}$.
A union bound over the whole parameter space yields $\sum_{k=1}^\infty\sum_{\ell=1}^\infty(2\ell k)^{-2} < 7/10$, and so $y$ is present for every parameter choice with constant probability.

All that remains is reusing the idea from Algorithm~\ref{algo:adaptive:query} of maintaining upper and lower bounds on the final work, and stop once they meet. The parameter space is explored using a priority queue. 
The pseudocode of the algorithm is given as Algorithm~\ref{algo:adaptive:query:multiprobe}.
To obtain good query time it is necessary to store the values $W_{k, \ell}$ computed so far and reuse them in Line~\ref{algo:multiprobe:comp} of Algorithm~\ref{algo:adaptive:query:multiprobe}.
Details are given in the proof below.
\begin{algorithm}[t]
   \caption{Adaptive-Multi-probe($q$, $\sigma$, $T$)}\samepage
   \label{algo:adaptive:query:multiprobe}
   \begin{algorithmic}[1]
      \State $W_{\text{best}} \gets n$; $k_\text{best} \gets 0$;
      $\ell_\text{best} \gets 1$; $\text{PQ} \gets $ empty priority queue
      \For{$1 \leq k \leq K$}
         \State$\text{PQ.insert}((k, 1), \text{reps}(k, 1))$
      \EndFor
      \While{PQ.min() $< W_{\text{best}}$}\label{algo:multiprobe:begin:while}
         \State $(k, \ell) \gets$ PQ.extractMin()
         \State PQ.{insert}($(k, \ell + 1)$, $(\ell+1)\cdot$reps$(k, \ell + 1)$)
         \State $W_{k,\ell} \gets \sum_{i = 1}^{\text{reps}(k,\ell)}\sum_{j = 1}^{\ell}
         \left(1+|T_{k, i}[\sigma_{k,j}h_{k,i}(q)]|\right)$\label{algo:multiprobe:comp}
         \If {$W_{k,\ell} < W_{\text{best}}$}
            \State $k_\text{best} \gets k;\quad \ell_\text{best} \gets \ell;\quad W_{\text{best}} \gets W_{k,\ell}$
         \EndIf
         \EndWhile\label{algo:multiprobe:end:while}
      \State\Return $
            \bigcup_{i=1}^{\text{reps}(k_\text{best},\ell_\text{best})}
            \bigcup_{j=1}^{\ell_\text{best}}
            \{x \in T_{k_\text{best}, i}[\sigma_{k_\text{best},j}h_{k_\text{best},i}(q)] \mid \text{dist}(x,q)\le r\}$\label{algo:multiprobe:return}
   \end{algorithmic}
\end{algorithm}

\begin{theorem}
   Let $S \subseteq X$ and $r$ be given. Let $(k,\ell)$ be a pair that  
   minimizes the right-hand side of \eqref{eq:w:multi}.
   Given a reasonable probing sequence $\sigma$, Algorithm~\ref{algo:adaptive:query:multiprobe} on DS solves SRR. If DS supports at least $\text{reps}(k, \ell)$ repetitions,  
   the expected running time is $O(W_\text{multi} \log^3 W_\text{multi})$ and 
   the expected number of distance computations is $O(W_\text{multi}\log W_\text{multi})$.
   \label{thm:adaptive:multiprobe:time}
\end{theorem}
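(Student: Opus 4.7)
The plan is to mirror the proof of Theorem~\ref{thm:adaptive:time} (the single-probe case), lifting the one-dimensional parameter search to the two-dimensional search implemented by the priority queue. I would split into correctness and runtime.

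For correctness, I would fix a point $y \in S$ with $\text{dist}(y,q)\le r$. By the definition of a reasonable sequence, the events $[\sigma_{k,j}h_k(q)=h_k(y)]$ are disjoint across $j$, so the probability of a collision in a single repetition at parameters $(k,\ell)$ is exactly $P_{k,\ell}$. Therefore the probability that $y$ is missed in all $\text{reps}(k,\ell)=\lceil 2\log(2k\ell)/P_{k,\ell}\rceil$ repetitions is at most $(1-P_{k,\ell})^{\text{reps}(k,\ell)} \le \exp(-2\log(2k\ell)) = (2k\ell)^{-2}$. Union bounding over the whole grid $(k,\ell)\in\mathbb{N}^2$ gives failure probability at most $\sum_{k,\ell\ge 1}(2k\ell)^{-2}<7/10$, so in particular $y$ is present at the final $(k_{\text{best}},\ell_{\text{best}})$ with constant probability, as required for SRR.

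For the running time, let $(k^*,\ell^*)$ be a pair minimizing the right-hand side of~\eqref{eq:w:multi}. The analysis rests on three observations. (i)~Because bucket sizes are retrievable in $O(1)$ time, Line~\ref{algo:multiprobe:comp} takes time $O(\ell\cdot\text{reps}(k,\ell))$, i.e., exactly the priority used in the PQ. (ii)~This priority is a deterministic lower bound on the actual work $W_{k,\ell}$, so the stopping condition $\text{PQ.min}()\ge W_{\text{best}}$ guarantees that no pair with priority exceeding the current $W_{\text{best}}$ is ever extracted. (iii)~Provided $\text{reps}(k^*,\ell^*)\le L$, the pair $(k^*,\ell^*)$ is reachable by the algorithm, so deterministically $W_{\text{best}}\le W_{k^*,\ell^*}$. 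Taking expectations and using that $\mathbb{E}[W_{k^*,\ell^*}] = 2\log(2k^*\ell^*)\cdot W_{\text{multi}}$ by definition,
\[
\mathbb{E}[W_{\text{best}}]
\;\le\;\mathbb{E}[W_{k^*,\ell^*}]
\;=\;O(\log(k^*\ell^*)\cdot W_{\text{multi}})
\;=\;O(W_{\text{multi}}\log W_{\text{multi}}),
\]
using $\ell^*\le W_{\text{multi}}$ and $k^*\le K=O(\log n)$. Since Line~\ref{algo:multiprobe:return} computes at most one distance per bucket cell touched, this directly yields the $O(W_{\text{multi}}\log W_{\text{multi}})$ bound on distance computations.

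For the total while-loop cost I would bound the number of extractions and multiply by the per-extraction cost. Levels with $\text{reps}(k,1)>W_{\text{best}}$ are never touched; using $p_{k,1}\ge p_1^k$ this restricts attention to $k = O(\log W_{\text{multi}})$. For each such level, a telescoping argument using the reasonableness inequality $P_{k,\ell+1}/P_{k,\ell}\le 1+1/\ell$ shows that the priority $\ell\cdot\text{reps}(k,\ell)=2\ell\log(2\ell k)/P_{k,\ell}$ is monotonically non-decreasing in $\ell$ and eventually grows past $W_{\text{best}}$ after at most $O(\log^2 W_{\text{multi}})$ steps (driven by the linear-in-$\ell$ prefactor once $P_{k,\ell}$ saturates). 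Each extraction contributes at most $W_{\text{best}}$ for reading bucket sizes plus $O(\log|\text{PQ}|)=O(\log W_{\text{multi}})$ for PQ maintenance, so the total exploration cost is $O(W_{\text{best}}\log^2 W_{\text{multi}})=O(W_{\text{multi}}\log^3 W_{\text{multi}})$ in expectation.

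The main obstacle will be the per-level bound on the number of extractions: reasonableness alone only forces priorities to grow very slowly (by a factor of $\log(2(\ell+1)k)/\log(2\ell k)$) between consecutive $\ell$'s, so a naive bound on $|\{\ell : \text{priority}\le W_{\text{best}}\}|$ is as large as $W_{\text{best}}$. Turning this into a polylog bound requires exploiting both the linear factor $\ell$ in the priority and the way $W_{\text{best}}$ is updated as cheap pairs are discovered, rather than treating $W_{\text{best}}$ as a static quantity. Making this bookkeeping precise, so that the expected work of the entire exploration still collapses to $O(W_{\text{multi}}\,\text{polylog})$ in expectation (rather than merely deterministically bounded), is the delicate part of the argument.
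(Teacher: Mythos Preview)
Your correctness argument and your bound on $\mathbb{E}[W_{\text{best}}]$ (and hence on the number of distance computations) match the paper. The gap is in the while-loop cost, and it is exactly the obstacle you flag in your final paragraph: the claim that each level contributes only $O(\log^2 W_{\text{multi}})$ extractions is false. If, say, $p_{k,1}=p_{k,2}=\cdots$ on some level, then $P_{k,\ell}=\ell\,p_{k,1}$ and the priority $\ell\cdot\text{reps}(k,\ell)=2\log(2\ell k)/p_{k,1}$ is essentially constant in $\ell$ until $P_{k,\ell}$ saturates; the number of extracted $\ell$'s on that level can be $\Theta(W_{\text{best}}/\log W_{\text{best}})$, not polylog. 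Combining this with your per-extraction cost of $O(\ell\cdot\text{reps}(k,\ell))$ from observation~(i) gives only an $O(W_{\text{best}}^2)$ bound, which is too weak.

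The paper sidesteps the extraction count entirely with two ingredients you are missing. First, it computes $W_{k,\ell+1}$ \emph{incrementally} from $W_{k,\ell}$, touching only the buckets gained at probe index $\ell{+}1$ and those lost because $\text{reps}(k,\ell{+}1)<\text{reps}(k,\ell)$; consequently each bucket at level $k$ is visited at most twice over the whole run, and the total work at level $k$ is $\sum_{i\le \ell_k^*}\text{reps}(k,i)$ rather than $\sum_{i\le \ell_k^*} i\cdot\text{reps}(k,i)$. Second, it applies the prefix-sum lemma (Lemma~\ref{lem:monsum}): since the $p_{k,i}$ are non-increasing, $\sum_{i\le \ell^*} 1/P_{k,i}=O(\ell^*\log\ell^*/P_{k,\ell^*})$, whence $\sum_{i\le \ell^*}\text{reps}(k,i)=O(\ell^*(\log\ell^*)^2/P_{k,\ell^*})$. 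The loop condition then bounds this by $O(W_{\text{best}}\log W_{\text{best}})$ per level, and summing over the $O(\log W_{\text{best}})$ active levels gives $O(W_{\text{best}}\log^2 W_{\text{best}})$. So the fix is not to sharpen the count of extractions but to amortize the cost of Line~\ref{algo:multiprobe:comp} across $\ell$ and invoke Lemma~\ref{lem:monsum}; without these two pieces the argument does not close.
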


\begin{proof}
   Correctness of the algorithm follows by the explanation before the theorem statement. 

   To analyze the running time, we define $\text{cost}(k,\ell) = \ell\cdot \text{reps}(k,\ell)$. This value is used as the priority of a parameter pair
   in the priority queue.
   It provides a lower bound on the work required to consider this parameter pair because we have to check that many buckets.
    Next we note that it cannot happen that all pairs $(k, \ell)$ in the priority queue have cost larger than $W_{\text{best}}$, but there exists a pair $(k', \ell')$ with $k' \geq k$ and $\ell' \geq \ell$ not inspected so far such that $\text{cost}(k', \ell') < W_\text{best}$.
    This is because for fixed $k$ the cost is non-decreasing in $\ell$ by Lemma~\ref{lem:monsum} (provided in Appendix~\ref{sec:monsum}) and for fixed $\ell$ cost$(k, \ell)$ is non-decreasing in $k$.

    To compute a new value $W_{k, \ell + 1}$ in Line~\eqref{algo:multiprobe:comp} of the algorithm, we take
    advantage of the work $W_{k, \ell}$ already discovered, and only compute the
    number of buckets that are new or no longer needed. Specifically, we compute 
    \begin{align*}
        W_{k, \ell+1} = W_{k, \ell}
        + \sum_{i=1}^{\text{reps}(k,\ell+1)} |T_{k, i}(\sigma_{k, \ell+1}(q))|
        - \sum_{j=1}^{\ell} \sum_{i=1 + \text{reps}(k,\ell)}^{\text{reps}(k,\ell+1)}
      |T_{k, i}(\sigma_{k, j}(q))|.
    \end{align*}
    So, for each $k$ we never visit a bucket more than twice and amortized over all operations, the computation of $W_{k, \ell + 1}$ takes time $O(\text{reps}(k, \ell + 1))$.
    For each $k$ with $1 \leq k \leq K$, let $\ell_k^\ast$ be the largest value
    of $\ell$ such that the pair $(k, \ell)$ was considered by the algorithm.
    The total cost of computing $W_{k, 1}, \ldots, W_{k, \ell_k^\ast}$ is then at most $\sum_{i = 1}^{\ell^\ast} \text{reps}(k,i)
    = \sum_{i=1}^{\ell^*}2(\log(2ki))/P_{k,i}
    = O(\ell^*(\log \ell^\ast)^2/P_{k, \ell^\ast})$
    by Lemma~\ref{lem:monsum}.
    By the loop condition, we know that $\log \ell^\ast \cdot \ell^\ast/P_{k, \ell^\ast}$ is at most $W_\text{best}$, so the algorithm spends time $O(W_\text{best} \log W_\text{best})$ for fixed $k$.

    Let $k_\text{max}$ be the maximum value of $k$ such that a pair $(k, \ell)$ was considered by the algorithm.
    Since we stop when every item in the priority queue has a priority higher or equal to $W_\text{best}$, we must have $k_\text{max}\le\frac{\log W_\text{best}}{\log1/p_1}$ because we need at least $p_1^{-k_{\text{max}}}$ repetitions for the single probe on level $k_\text{max}$.

    Thus, the final search time for the while-loop is $O(W_\text{best}
    \log(W_\text{best})^2)$ and the algorithm makes exactly $W_\text{best}$ distance computations in Line~\eqref{algo:multiprobe:return}.
    Now observe that by bounding the additional repetitions by $O(\log W_\text{multi})$, the same reasoning as 
    in the proof of Theorem~\ref{thm:adaptive:time} shows $\E(W_\text{best}) = O(W_\text{multi} \log W_\text{multi})$. 
\end{proof}
%
%

\section{A Probing Sequence in Hamming Space}\label{sec:hamming:multiprobing}
In this section we analyze bitsampling LSH in Hamming space~\cite[Section 3.2.1]{Har-PeledIM12} using a novel, simple probing sequence.
We consider the static setting as in Section~\ref{sec:static:analysis}, where the number of points to report and the expansion around the query is known.
We then show the existence of certain (optimal) level and probing length parameters, and prove that using those give a good expected running time.
The adaptive query algorithm from Section~\ref{sec:algorithm} would find parameters at least as good as those, and thus yield a running time as least as good as what we show here.

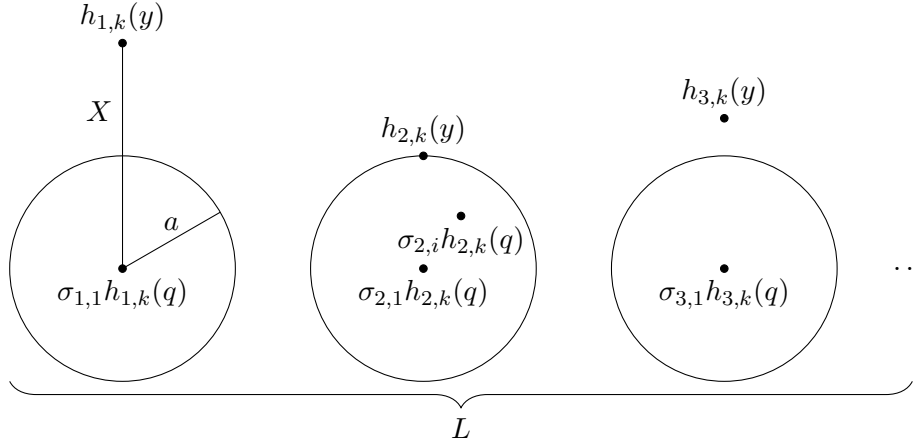
\begin{figure}[t!]
   \centering
   \begin{tikzpicture}
      \draw (-4,0) -- ++(30:1.5) node[pos=.5,above] {$a$};
      \draw (-4,0) -- (-4,3) node[pos=.7,left] {$X$};
      \draw (-4,0) circle (1.5);
      \draw (0,0) circle (1.5);
      \draw (4,0) circle (1.5);
      \draw[fill=black] (-4,0) circle (.05) node[below] {$\sigma_{1,1}h_{1,k}(q)$};
      \draw[fill=black] (0,0) circle (.05) node[below] {$\sigma_{2,1}h_{2,k}(q)$};
      \draw[fill=black] (.5,.7) circle (.05) node[below] {$\sigma_{2,i}h_{2,k}(q)$};
      \draw[fill=black] (4,0) circle (.05) node[below] {$\sigma_{3,1}h_{3,k}(q)$};
      \draw[fill=black] (-4,3) circle (.05) node[above] {$h_{1,k}(y)$};
      \draw[fill=black] (0,1.5) circle (.05) node[above] {$h_{2,k}(y)$};
      \draw[fill=black] (4,2) circle (.05) node[above] {$h_{3,k}(y)$};
      \node at (6.5,0) {$\dots$};
      \draw [decorate,decoration={brace,amplitude=10}] (6.5,-1.5) -- (-5.5,-1.5)
         node[midway,anchor=north,yshift=-10] {$L$};
   \end{tikzpicture}
   \caption{At each of the $L$ repetitions we query the closest $\ell$ positions.
   Since the projected distance $X$ to our target point $y$ is distributed as $\text{Bin}(k,\text{dist}(q,y)/d)$, we find $y$ with constant probability by setting $L=O(\Pr[X\le a]^{-1})$.}
   \label{fig:ham_repeat}
\end{figure}

Our scheme uses hash functions $h_k : \{0,1\}^d \rightarrow \{0,1\}^k$ that sample $k$ positions at random with repetition.
For a fixed query point $q \in \{0,1\}^d$ and $k \geq 1$,
the probing sequence $\sigma_{k, \ell}$ maps $h_k(q)$ to the $\ell$-th closest point in $\{0,1\}^k$, where ties are broken arbitrarily.
This sequence can be generated efficiently, see \cite{knuth2011combinatorial}.

Fix a target close point $y \in \{0,1\}^d$ at distance $r$,
let $p$ be the probability that $q$ and~$y$ collide,
and let $p_{k, \ell}$ be the probability that $y$ lands in the $\ell$-th bucket that we check.
Furthermore, let $V(a)=\sum_{i=0}^{a}{k\choose i}$ be the volume of the radius $a$ hamming ball.
If $\sigma_{k,\ell}h_k(q)$ is at distance $a$ to $h(q)$, we have a collision if $q$ and $y$ differ in exactly $a$ out of the $k$ coordinates chosen by $h_k$.
Hence, $p_{k,\ell} = p^{k-a}(1-p)^a$ for the $a$ satisfying $V(a-1) < \ell \le V(a)$.
Thus, the sequence is reasonable. 
Figure~\ref{fig:ham_repeat} illustrates our approach.

The best approximations to sizes of hamming balls are based on the entropy function.
Hence, for the purpose of stating the theorem, we introduce the following notation. 
For $\alpha \in [0,1]$ and $\beta \in [0,1]$, we let
$H(\alpha) = \alpha \log 1/\alpha + (1 - \alpha) \log 1/(1 - \alpha)$ and
$\infdiv{\alpha}{\beta} = \alpha \log (\alpha / \beta) + (1 - \alpha) \log ((1
- \alpha)/(1 - \beta))$ denote the binary entropy of $\alpha$ and the relative
entropy between $\alpha$ and $\beta$, respectively.
Moreover, 
let $\rho(r,c) = \tfrac{\log t}{\log n}\left(1+\tfrac{\infdiv{\alpha}{1-p(r)}}{\HE(\alpha)}\right)$
where $\alpha$ is defined implicitly from $\tfrac{\log t}{\log n}\left(1+\tfrac{\infdiv{\alpha}{1-p(cr)}}{\HE(\alpha)}\right)=1$.
\begin{theorem}
   Let $r > 0$ and $c \geq 1$.
   Let $S$ be a set of $n$ points and let DS be the Multi-level LSH data structure obtained from preprocessing $S$ with $L = \Omega(n^{\rho(r,c)})$.
   Given a query point $q$, let $t = N_r(q)+1$ and let $c_q^\ast$ be the expansion around $q$ in $S$. If $c_q^\ast \geq c$,
   there exists a query algorithm on DS to solve SRR with running time $O(n^{\rho(r, c_q^*)})$, otherwise
   the running time is $O(n^{\rho(r,c)}+N_{cr}(q))$.
   \label{thm:multi_static}
\end{theorem}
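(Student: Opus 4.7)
The plan is to exhibit an explicit parameter pair $(k^\ast,\ell^\ast)$ whose work $W_{k^\ast,\ell^\ast}$ appearing in~\eqref{eq:w:multi} matches the stated bound; since Algorithm~\ref{algo:adaptive:query:multiprobe} optimizes over all pairs, Theorem~\ref{thm:adaptive:multiprobe:time} will then deliver the claim up to polylogarithmic factors, which get absorbed into $n^{\rho(r, c_q^\ast)}$. Concretely, I would take
\begin{align*}
k^\ast = \left\lceil \frac{\log n}{\HE(\alpha) + \infdiv{\alpha}{1-p(c_q^\ast r)}} \right\rceil,
\qquad a^\ast = \lfloor \alpha k^\ast \rfloor,
\qquad \ell^\ast = V(a^\ast),
\end{align*}
where $\alpha$ is the entropy parameter from the theorem statement. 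Using the standard estimates $\binom{k}{\alpha k} = 2^{(\HE(\alpha)+o(1))k}$ and $\Pr[\text{Bin}(k,p)\leq \alpha k] = 2^{-(\infdiv{\alpha}{p}+o(1))k}$ for $\alpha < p$, the defining equation for $\alpha$ translates to $\ell^\ast = \Theta(t)$ and $P_{k^\ast,\ell^\ast} = \Theta(t/n^{\rho(r,c_q^\ast)})$, so the required repetition count is $\text{reps}(k^\ast,\ell^\ast) = \tilde O(n^{\rho(r,c_q^\ast)}/t)$, which is within the budget $L = \Omega(n^{\rho(r,c)})$ supported by DS since $c_q^\ast \geq c$ forces $\rho(r, c_q^\ast) \leq \rho(r, c)$.

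The work estimate splits into contributions from near and far points. For the at most $2t$ data points inside the ball of radius $c_q^\ast r$ (by the expansion hypothesis), each has per-repetition collision probability at most $P_{k^\ast,\ell^\ast}$ by stochastic dominance of the binomial, hence $O(1)$ expected retrievals across all $L$ repetitions, for a total of $O(t)$. For points outside the ball, the per-repetition collision probability is at most $\Pr[\text{Bin}(k^\ast, c_q^\ast r/d)\leq a^\ast] = 2^{-k^\ast \infdiv{\alpha}{1-p(c_q^\ast r)}(1+o(1))}$; multiplying by $n$ and by $L = 2^{k^\ast \infdiv{\alpha}{1-p(r)}(1+o(1))}$, the identity $k^\ast(\HE(\alpha)+\infdiv{\alpha}{1-p(c_q^\ast r)}) = \log n$ yields exactly $n^{\rho(r,c_q^\ast)}$, which also matches the trivial lower bound $L\cdot\ell^\ast$ on the number of inspected buckets. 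Thus $W_{k^\ast,\ell^\ast} = O(n^{\rho(r,c_q^\ast)})$, and applying Theorem~\ref{thm:adaptive:multiprobe:time} with $W_\text{multi} \leq W_{k^\ast,\ell^\ast}$ concludes case~(i).

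For case~(ii) where $c_q^\ast < c$, the same parameters calibrated with $c$ in place of $c_q^\ast$ suffice. The ring $(c_q^\ast r, cr]$ is no longer controlled by the expansion property, but each of the at most $N_{cr}(q)$ points it contains contributes only $O(1)$ expected retrievals (again by stochastic dominance), so this ring adds $O(N_{cr}(q))$ to the total work, matching the additive term in the statement. The main obstacle I expect is bookkeeping the $o(1)$ and $1/\mathrm{poly}(k^\ast)$ corrections in the entropy approximations and the extra $\log(2\ell^\ast k^\ast)$ factor in $\text{reps}(k^\ast,\ell^\ast)$, plus verifying that the floors in $a^\ast$ and $k^\ast$ cost only constants in the relevant probabilities --- both standard but tedious.
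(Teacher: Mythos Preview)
Your approach is essentially the paper's: choose the level $k^\ast$ and probe radius $a^\ast$ so that $V_{k^\ast}(a^\ast)\approx t\approx n\Pr[\text{Bin}(k^\ast,1-p(c_q^\ast r))\le a^\ast]$, split $S$ into near/middle/far points, and apply the standard entropy approximations for Hamming-ball volumes and binomial tails. The paper frames this as a direct upper bound on $W_\text{multi}$; you additionally route through Theorem~\ref{thm:adaptive:multiprobe:time}, which is consistent with the text preceding the theorem.

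There is one genuine slip, though. The stochastic-dominance direction you invoke for the $\le 2t$ points inside the ball of radius $c_q^\ast r$ is backwards. A point at distance $\Delta$ has per-repetition collision probability $\Pr[\text{Bin}(k^\ast,\Delta/d)\le a^\ast]$, which is \emph{decreasing} in $\Delta$; hence points with $\Delta<r$ have collision probability \emph{at least} $P_{k^\ast,\ell^\ast}$, not at most (for $\Delta=0$ it is $1$). So you cannot conclude ``$O(1)$ expected retrievals across all repetitions'' per near point, and the claimed $O(t)$ contribution from this group is false. The paper handles these points with the trivial bound (probability $\le 1$), giving total contribution $t\cdot\text{reps}(k^\ast,\ell^\ast)=t\cdot n^{\rho(r,c_q^\ast)}/t=n^{\rho(r,c_q^\ast)}$, which matches the bucket term $\ell^\ast\cdot\text{reps}(k^\ast,\ell^\ast)$ and the far-point term, so your final bound $W_{k^\ast,\ell^\ast}=O(n^{\rho(r,c_q^\ast)})$ survives. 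Your case~(ii) argument for the ring is fine, since those points \emph{are} farther than $r$; just replace the faulty near-point step by the trivial bound and the proof goes through.
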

We do not know of a simple closed form for $\rho(r,c)$, but Figure~\ref{fig:work/logn} on Page~\pageref{fig:work/logn} shows a numerical evaluation for comparison with the running time obtained for single-probing and other approaches.

The figure suggests that we always get better exponents than the single-probe approach, and that we get optimal query time for large $t$ and asymptotically optimal for $t=n^{o(1)}$.
Corollary~\ref{cor:multi_static} confirms this:
\begin{corollary}
   Let $\rho=(\log p_1)/(\log p_2) < 1/c$ be the usual exponent for bitsampling, then:

   If $t \ge n^{\frac{\HE(1-p_2)}{\HE(1-p_2)+\infdiv{1-p_2}{1-p_1}}}$,
   the expected query time is $O(n^{\rho(r,c)}) = O(n^\rho + t)$.

   If $t=n^{o(1)}$, the expected query time is
   $O(n^{\rho(r,c)}) = O(n^\rho t^{o(1)})$.

   \label{cor:multi_static}
\end{corollary}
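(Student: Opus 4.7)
The plan is to parametrize by $\tau = \log t/\log n$ and introduce $f(\alpha) = \infdiv{\alpha}{1-p_2}/\HE(\alpha)$ and $g(\alpha) = \infdiv{\alpha}{1-p_1}/\HE(\alpha)$. The implicit definition of $\alpha$ in Theorem~\ref{thm:multi_static} then reads $\tau(1+f(\alpha)) = 1$, and the quantity of interest is $\rho(r,c) = \tau(1+g(\alpha)) = (1+g(\alpha))/(1+f(\alpha))$. On the branch $\alpha \in (0, 1-p_2)$, the function $f$ is continuous and strictly decreasing from $+\infty$ (at $\alpha = 0$) to $0$ (at $\alpha = 1-p_2$), so $\alpha = \alpha(\tau)$ is uniquely determined and corresponds to the smaller of the two implicit roots, which is the relevant one for our bound.

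For the subpolynomial regime $t = n^{o(1)}$, i.e., $\tau = o(1)$, the constraint forces $f(\alpha)\to\infty$ and hence $\alpha\to 0$. I would use the expansions $\infdiv{\alpha}{1-p} = \log(1/p) + O(\alpha\log(1/\alpha))$ and $\HE(\alpha) = \alpha\log(1/\alpha)(1+o(1))$ as $\alpha\to 0$. The constraint then yields $\alpha\log(1/\alpha) = \Theta(\tau)$, while $g(\alpha)/f(\alpha) = \infdiv{\alpha}{1-p_1}/\infdiv{\alpha}{1-p_2} = \rho + O(\alpha\log(1/\alpha)) = \rho + O(\tau)$. Substituting back into $\rho(r,c) = (1+g)/(1+f)$ and simplifying gives $\rho(r,c) = \rho + O(\tau) = \rho + O(\log t/\log n)$. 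Therefore $n^{\rho(r,c)} = n^{\rho}\cdot n^{O(\log t/\log n)} = n^{\rho}\cdot t^{O(1)}$, which in the subpolynomial regime equals $n^{\rho}\cdot t^{o(1)}$, proving the second assertion.

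For the large-$t$ regime $\tau \ge \tau^*$, where $\tau^* = \HE(1-p_2)/(\HE(1-p_2) + \infdiv{1-p_2}{1-p_1})$, I would proceed in two steps. First, verify algebraically that $\rho \le \tau^*$: after cross-multiplying, this reduces to $\log(1/p_1)\cdot \infdiv{1-p_2}{1-p_1} \le \log(p_1/p_2)\cdot \HE(1-p_2)$, which I would derive by viewing both sides as functions of $p_1$ (both vanish at $p_1 = p_2$) and comparing their derivatives, using convexity of KL divergence in its second argument. Second, bound $\rho(r,c)$ for $\tau\ge\tau^*$: as $\tau$ grows past $\tau^*$, $\alpha(\tau)$ moves monotonically toward $1-p_2$, and by continuity $g(\alpha(\tau))$ remains bounded. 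A careful bookkeeping shows $g(\alpha(\tau)) = O(1)$ uniformly on $[\tau^*, 1]$, giving $\rho(r,c) = \tau\cdot O(1)$ and hence $n^{\rho(r,c)} = n^{O(\tau)} = t^{O(1)}$. Combined with $\rho \le \tau^*$, this yields the bound $n^{\rho(r,c)} = O(n^\rho + t)$ asserted by the corollary.

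The main obstacle I anticipate is the fine behaviour near $\alpha = 1-p_1$ in the large-$t$ regime: $g(\alpha)$ vanishes only at $\alpha = 1-p_1$, and this is not exactly where the constraint is tight at $\tau = \tau^*$. Some residual slack in $\rho(r,c) - \tau$ is therefore unavoidable, and the argument must show this slack is absorbable by the $O(\cdot)$ notation in $O(n^\rho + t)$. For the small-$t$ expansion, the cleanest route is to package $g(\alpha)/f(\alpha) = \rho + O(\alpha\log(1/\alpha))$ as a short lemma proved by substituting the Taylor expansions of $\infdiv{\alpha}{1-p}$ and $\HE(\alpha)$ and simplifying; the algebra is then routine.
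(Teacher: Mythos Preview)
Both parts have genuine gaps.

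\textbf{Small-$t$ part.} Your final step ``$t^{O(1)}$, which in the subpolynomial regime equals $t^{o(1)}$'' is false: the $O(1)$ exponent is a fixed constant, not a vanishing one, so $t^{O(1)}$ can be $t^{100}$ while $t^{o(1)}$ cannot. The expansion $\infdiv{\alpha}{1-p}=\log(1/p)+O(\alpha\log(1/\alpha))$ is correct but too crude; plugging it into $(1+g)/(1+f)$ genuinely only yields $\rho(r,c)=\rho+O(\tau)$. What you are missing is the exact cross-entropy identity
\[
\HE(\alpha)+\infdiv{\alpha}{1-p}=(1-\alpha)\log\tfrac{1}{p}+\alpha\log\tfrac{1}{1-p},
\]
which makes $(1+g)/(1+f)$ a ratio of \emph{linear} functions of $\alpha$, hence $\rho(r,c)=\rho+\Theta(\alpha)$ rather than $\rho+O(\alpha\log(1/\alpha))$. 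Since the constraint $\alpha\log(1/\alpha)=\Theta(\tau)$ inverts (by bootstrapping) to $\alpha=\Theta(\tau/\log(1/\tau))$, one gets $\rho(r,c)=\rho+\tau\cdot O(1/\log(1/\tau))$ and thus $n^{\rho(r,c)}=n^\rho t^{O(1/\log(1/\tau))}=n^\rho t^{o(1)}$. This is exactly the paper's route.

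\textbf{Large-$t$ part.} The obstacle you flag is fatal, not a technicality. Because $g(\alpha)=\infdiv{\alpha}{1-p_1}/\HE(\alpha)>0$ for all $\alpha\ne 1-p_1$, the formula gives $\rho(r,c)>\tau$ strictly, so you can never get past $n^{\rho(r,c)}=t^{1+\Theta(1)}$; showing $g=O(1)$ and $\rho\le\tau^*$ does not combine to $O(n^\rho+t)$. The paper does not bound the formula here at all: it returns to the variational bound on $W_{\text{multi}}$ from the proof of Theorem~\ref{thm:multi_static} and notes that once $\alpha\ge 1-p_1$ the large-deviation estimate $\Pr[\text{Bin}(k,1-p_1)\le\alpha k]\approx e^{-k\infdiv{\alpha}{1-p_1}}$ (which is what produced the formula) is no longer the right thing---the probability is in fact $\ge 1/2$ since $\alpha k$ is at least the mean. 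Hence the repetition factor is $O(1)$ and every remaining term is $O(t)$, giving $W_{\text{multi}}=O(t)$ directly. The threshold $\tau^*$ is exactly where $\alpha$ crosses $1-p_1$, and checking that $t\ge n^\rho$ there yields $O(t)=O(n^\rho+t)$.
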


\begin{proof}[Proof of Theorem~\ref{thm:multi_static}]
   We will now show that the smaller number of repetitions needed by multi-probing leads to fewer collisions with the $t$ very close points in hard instances of SRR.
   To see this, we bound the value of $W_\text{multi}$ from \eqref{eq:w:multi} as follows:
   \begin{align*}
      W_\text{multi}
      &=
      \min_{k, \ell} \left[
            \frac1{\sum_{1\le \ell \le \lambda}\Pr[\sigma_{k, \ell}(h_k(q)) = h_k(y)]}
            \left(
               \ell
               + \!\!\!\!\!\sum_{x \in S,1 \leq \ell \le \lambda} \!\!\!\Pr[\sigma_{k, \ell}(h_k(q)) = h_k(x)]
            \right)
      \right]
      \\&\le
      \min_{k, a} \left[
            \frac1{\Pr[\text{dist}(h_k(q),h_k(y))\le a]}
            \left(
               V_k(a)
               + \sum_{x \in S} \Pr[\text{dist}(h_k(q),h_k(x))\le a]
            \right)
      \right]
      \\&\le
      \min_{k, a} \left[
            \frac1{\Pr[\text{Bin}(k,1\text{-}p_1)\le a]}
               \left(
                  V_k(a)
                  + t
                  + t'\Pr[\text{Bin}(k,1\text{-}p_1)\le a]
                  + n\Pr[\text{Bin}(k,1\text{-}p_2)\le a]
               \right)
      \right]
   \end{align*}
   The first inequality holds by restricting $\ell$ to only take values that are the volume of a $k$-dimensional hamming ball; in the second inequality we upper bounded the collision probabilities for points in ranges $[0,r)$, $[r,cr)$ and $[cr,d]$.

   The next step is to minimize this bound over the choice of $k$ and $a$.
   We focus on $t'=O(t)$ and so we want $V_k(a) = t = n\Pr[\text{Bin}(k,1\text{-}p_2)\le a]$.
   For simplicity we write $\alpha=a/k$ for the normalized radius.
   We use the following tight bound on the tail of the binomial distribution:
   $\Pr[\text{Bin}(k,p)\le \alpha k] = \exp(-k\infdiv{\alpha}{p})\Theta(1/\sqrt{k})$ for $\alpha\in(0,1/2)$
   and $V_k(\alpha k) = \exp(k H(\alpha))\Theta(1/\sqrt k)$~\cite{petrov2012sums}.
   Then our equation can be written as $k \HE(\alpha) = \log t = \log n - k\infdiv{\alpha}{1-p_2}$.
   This suggests $k=\frac{\log t}{\HE(\alpha)}=\frac{\log n}{\HE(\alpha) + \infdiv{\alpha}{1-p_2}}$ and $\frac{\HE(\alpha) + \infdiv{x}{1-p_2}}{\HE(\alpha)} = \frac{\infdiv{\alpha}{1-p_2}}{\HE(\alpha)} + 1$.
   We can then plug $k$ into the bound on $W_\text{multi}$:
   \begin{align}
      W_\text{multi}
      &\le \frac{3t}{\Pr[\text{Bin}(k,1-p_1)\le a]\Theta(\sqrt k)} + t'
      \notag\\&= O(t \exp(k \infdiv{\alpha}{1-p_1})) + t'
      = O\left(n^{\frac{\log t}{\log n}\left(\frac{\infdiv{\alpha}{1-p_1}}{\HE(\alpha)} +1\right)}\right) + t'
      \label{eqn:log_static_work}
   \end{align}
   which are exactly the values stated in Theorem~\ref{thm:multi_static}.
\end{proof}

\begin{proof}[Proof Sketch Corollary~\ref{cor:multi_static}]
   For the first statement observe that if $\alpha$ is as large as $1-p_1$, then $\Pr[\text{Bin}(k,1-p_1)\le\alpha k]$ is constant.
   The second factor in the minimization has all terms being within a constant of $t$, and so the whole thing becomes $O(t)$.
   We can check that $\alpha\ge1-p_1$ happens exactly when $\frac{\log t}{\log n} \ge\frac{\HE(1-p_2)}{\HE(1-p_2)+\infdiv{1-p_2}{1-p_1}}$.
   In this range $t \ge n^\rho$, so $O(t)=O(n^\rho+t)$.

   For the second part of the corollary, we solve the equation implied by Theorem~\ref{thm:multi_static}, asymptotically as $\tau=\frac{\log t}{\log n}\to0$.
   Details can be found in Appendix~\ref{app:cor:proof}, but the idea is as follows:
   We first define $f_p(\alpha) = 1+\frac{\infdiv{\alpha}{p}}{\HE(\alpha)}$, and show $f_{p_1}(\alpha)=(\rho+\psi\alpha/\log\frac1{p_2}+O(\alpha^2))f_{p_2}(\alpha)$ for~$\psi$ being 
   the constant defined in Corollary~\ref{cor:multi_static}.
   Using bootstrapping, we show the inversion $\alpha = f_{p_2}^{-1}(1/\tau) = \frac{\log1/p_2}{\alpha\log1/\alpha}+O(1/\log\frac1\alpha)$.
   Plugging this into \eqref{eqn:log_static_work} proves the corollary.
\end{proof}

\section{Conclusion}\label{sec:conclusion}
In this article we proposed an adaptive LSH-based algorithm for Spherical Range Reporting that is never worse than a static LSH data structure knowing optimal parameters for the query in advance, and much better on many input distributions where the output is large or the query is easy.

The main open problem remaining is to achieve target time $O(n^\rho+t)$ for all inputs.
One approach might be a data-dependent data structure as described in \cite{andoni2015optimal}. In the light of our multi-probing results, we however wonder if the bound can be obtained data independently as well. 
Here, it would be interesting to analyze other probing sequences. It would be interesting to see whether one can describe
adaptive query algorithms that make use of the output-sensitive space/time-tradeoff data structures we described in 
Appendix~\ref{sec:appendix2}.
Finally, it would be natural to extend our methods to give better LSH data structures for the approximate $k$-nearest neighbor problem.


\bibliography{lit}


\appendix

\section{Proof of Theorem~\ref{thm:lsh}}
\label{sec:proof:lsh}

\begin{proof}
Given access to an LSH family $\HH$ with the properties stated in the theorem
and two parameters $L$ and $k$ (to be specified below), repeat the following process independently for
each $i$ in $\{1, \ldots, L\}$:
Choose $k$ hash functions $g_{i, 1},\ldots,g_{i, k}$ independently at random from~$\HH$.
For each point $p \in S$, we view the
sequence $h_i(p) = (g_{i, 1}(p),\ldots, g_{i, k}(p)) \in R^k$ as the hash code of $p$,
identify this hash code with a bucket in a table, and store a reference to $p$
in bucket $h_i(p)$.
To avoid storing empty buckets from $R^k$, we resort to
hashing and build a hash table $T_i$ to store the non-empty buckets for $S$ and $h_i$.

Given a query $q \in X$, we retrieve all points from the buckets $h_1(q),
\ldots, h_L(q)$ in tables $T_1, \ldots, T_L$, respectively, and report
a close point in distance at most $cr$ as soon as we find such a point.
Note
that the algorithm stops and reports that no close points exists after
retrieving more than $3L$ points, which is crucial to guarantee query time
$O(n^\rho)$.

The parameters $k$ and $L$ are set according to the following reasoning.
First,
set $k$ such that it is expected that at most one distant point at distance at
least $cr$ collides with the query in one of the repetitions.
This means that
we require $n
p_2^k \leq 1$ and hence we define $k = \lceil \frac{\log n}{\log(1/p_2)} \rceil$.
To find a close point at distance at most $r$ with probability at
least $1 - \delta$, the number of repetitions $L$ must satisfy
$\delta \leq (1 - p_1^k)^L \leq \exp(-p_1^k \cdot L)$.
This means that $L$
should be at least $p_1^{-k} \ln \delta$ and simplifying yields $L = O(n^\rho)$.
Note that these parameters are set to work even in a worst-case scenario where there is
exactly one point at distance $p$ and all other points have distance slightly larger 
than $cr$.
\end{proof}

\section{A Trie-based Version of the Data Structure}
\label{sec:trie}
In this section we discuss an alternative representation of our data structure. This
is meant as a replacement for the Multi-level LSH data structure described in Definition~\ref{def:multi:level:lsh}. It offers better space consumption while being slower to query. 

As in the LSH forest data structure proposed by Bawa et al.~\cite{Bawa05}, we do not 
store references to data points in hash tables. Instead we use a sorted array with a trie as a 
navigation structure on the array. The technical description follows.
   
First, choose $K \cdot L$ functions $g_{i, j}$ for $1 \leq i \leq L$ and $1 \leq
k \leq K$ from $\HH$ independently at random.  For each $i \in \{1, \ldots,
L\}$, we store a sorted array $A_i$ with references to all data points in $S$
ordered lexicographically by there bucket code over $R^K$. To navigate this
array quickly, we build a trie over the bucket codes of all keys in $S$ of depth
at most $K$. Each vertex of the trie has two attributes {\tt leftIndex} and {\tt
rightIndex}. If the path from the root of the trie to vertex $v$ is labeled
$L(v)$, then {\tt leftIndex} and {\tt rightIndex} point to the left-most and
right-most elements in $A_i$ whose bucket code starts with $L(v)$. We fix some
more notation. For each point $q \in X$, we let $v_{i, k'}(q)$ be the vertex in
trie $\TT_i$ that is reached by searching for the bucket code of $q$ on level at
most $k'$. Furthermore, we let $\TT_{i, k}(q)$ denote the set of keys that 
share the same length $k$ prefix with $q$ in trie $\TT_i$. We can compute 
$|\TT_{i, k}(q)|$ by subtracting $v_{i,k}(q).\texttt{leftIndex}$ from
$v_{i,k}(q).\texttt{rightIndex} + 1$.

\section{Difficult Inputs for Standard LSH}
\label{sec:worstcase}
Suppose we want to solve SRR using an LSH family $\HH$.
Assume that the query point $q \in X$ is fixed.
Given $n$, $t$ with $1 \leq t \leq n$, and $c > 1$, we generate a data set $S$ by picking
\begin{itemize}
\item $t - 1$ points at distance $\epsilon$ from $q$, for $\epsilon$ small enough that even concatenating $\lceil\frac{\log n}{\log 1/p_2}\rceil$ hash functions from $\HH$, 
    we still have collision probability higher than $0.01$,
\item one point $x \in X$ with $\dist(q, x) = r$,
\item the remaining $n-t$ points at distance $cr$.
\end{itemize}
We call a set $S$ that is generated by the process described above a \emph{$t$-heavy input for SRR on~$q$}. By 
definition, a $t$-heavy input has expansion $c$ at query point $q$.
We argue that the standard LSH approach is unnecessarily slow on such inputs.
\begin{observation}
    Suppose we want to solve SRR in $(X, \dist)$ using LSH with parameters as in Theorem~\ref{thm:lsh} with LSH family $\HH$.
    Let $q \in X$ be a fixed query point, and $S$ be a $t$-heavy input generated by the process above.
    Then the expected number of points retrieved from the hash tables on query $q$ in the LSH data structure is $\Theta(t n^\rho)$.
\end{observation}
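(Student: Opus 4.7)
The plan is to decompose the expected work $W$ by point class and apply linearity of expectation. The three classes in the $t$-heavy input contribute asymmetrically: the $t-1$ very close points have inflated collision probability by construction, the single point at distance $r$ collides with probability $p_1^k$, and the $n-t$ far points at distance $cr$ each collide with probability at most $p_2^k$. I expect the $t-1$ very close points to dominate, producing $\Theta(t)$ collisions per repetition, which after multiplying by $L = \Theta(n^\rho)$ repetitions gives $\Theta(t n^\rho)$ overall.

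First I would fix the standard LSH parameters from Theorem~\ref{thm:lsh}: the concatenation length $k = \lceil \log n / \log(1/p_2) \rceil$, so that $p_2^k \le 1/n$, and the repetition count $L = \Theta(p_1^{-k}) = \Theta(n^\rho)$. Since the SRR variant inspects the union of all bucket contents (no early termination as in the single-point variant), the expected work is $W = L \cdot \E[\text{collisions in one repetition}]$, where the per-repetition expectation is computed by summing $\Pr[h_i(q) = h_i(x)]$ over $x \in S$.

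Next I would evaluate this sum class by class. For each of the $t-1$ points at distance $\epsilon$, the construction of $S$ guarantees that even with a composite hash code of length $k$, the collision probability with $q$ is at least $0.01$; these alone contribute at least $0.01(t-1)$. The point at distance $r$ contributes exactly $p_1^k = \Theta(n^{-\rho}) \le 1$. Each of the $n-t$ points at distance $cr$ contributes at most $p_2^k \le 1/n$, so their total contribution is at most $(n-t)/n \le 1$. Adding the three terms, the per-repetition expected collision count lies in $[0.01(t-1),\, t+2]$, i.e.\ is $\Theta(t)$ when $t \ge 2$ (and $\Theta(1)$ when $t=1$, which still satisfies the stated bound). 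Multiplying by $L = \Theta(n^\rho)$ yields the two-sided bound $\E[W] = \Theta(t n^\rho)$.

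The main subtlety—rather than a real obstacle—is making sure the lower bound is genuinely attained, which is exactly what the choice of $\epsilon$ in the definition of a $t$-heavy input purchases: the per-point collision probability of the very close points is kept bounded below by an absolute constant at length $k$, so it cannot be washed out by the standard LSH parameter choice. Beyond this, the argument is a clean calculation: the algorithm's repetition count is fixed in terms of $n$ and $\rho$ and does not adapt to the number of reported points, so the $\Omega(t)$ close-point collisions per repetition must be paid $L = \Omega(n^\rho)$ times, yielding $\Omega(t n^\rho)$; the matching $O(t n^\rho)$ upper bound follows since each of the other two classes contributes $O(1)$ per repetition.
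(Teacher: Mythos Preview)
Your proof is correct and follows essentially the same approach as the paper's: fix the standard parameters $k=\lceil\log n/\log(1/p_2)\rceil$ and $L=\Theta(n^\rho)$, observe that the $t-1$ very close points each collide with constant probability by construction, and conclude that each of the $\Theta(n^\rho)$ repetitions contributes $\Theta(t)$ expected collisions. Your treatment is in fact more thorough than the paper's brief argument, since you explicitly verify the matching upper bound by checking that the single point at distance $r$ and the $n-t$ far points together contribute only $O(1)$ per repetition.
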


\begin{proof}
    The standard LSH data structure is set up with $k = \lceil \frac{\log n}{\log 1/p_2}\rceil$ and $L=O(n^\rho)$. 
    $L$ repetitions are necessary to find the close point at distance $r$ with constant probability.
    By the construction of $S$, each repetition will contribute at least $\Theta(t)$ very close points in expectation.
    So, we expect to retrieve $O(tn^\rho)$ close points from the hash tables in total.
\end{proof}
    The process described above assumes that the space allows us to pick sufficiently many points at a certain distance.
    This is for example true in $\mathbb{R}^d$ with Euclidean distance.
    In Hamming space $\{0, 1\}^d$ we would change the above process to enumerate the points from distance $1, 2, \ldots$ and distance $cr + 1, cr + 2, \ldots$.
    If $d$ and $r$ are sufficiently large, the same observation as above also holds for inputs generated according to this process.

\section{Examples For Calculating $W_\text{single}$ for Certain Input Distributions}
\label{sec:examples:w:single}

In this section we discuss two examples to get a sense for quantity \eqref{eq:w:single} defined
on Page~\pageref{eq:w:single}.

\paragraph{Example 1 (Random Points in Hamming Space)}
Fix a query point $q \in \{0,1\}^d$ and assume that our data set $S$ consists of $n$ uniform random points from $\{0,1\}^d$.
Then the distance $X$ from our query point is binomially distributed $\sim\text{Bin}(n,1/2)$.
If we choose bitsampling as in~\cite{IndykM98} as hash function, $\sum_{x\in S}\Pr[h_k(q)=h_k(x)]$ is just $n\E((1-X/d)^k)=nd^{-k}\E(X^k)$.
This coresponds to finding the $k$th moment of a binomial random variable, which we can approximate by writing $X = d/2 + Z_d\sqrt{d/4}$ where $\E(Z_d)=0$ and $Z_d\to Z$ converges to a standard normal.
Then
$n\E((X/d)^k)
= n2^{-k}\E(1 + Z_d/\sqrt{d})^k
= n2^{-k}(1 + k\E(Z_d)/\sqrt{d} + O(k^2/d))
= n2^{-k}(1+O(k^2/d))$.
For dimension $d=\Omega(\log n)^2$ our algorithm would take $k\approx\log_2 n$ to get
$\sum_{x\in S}\Pr[h_k(q)=h_k(x)] = O(1)$ and
$W=n^{\frac{\log 1/p_1} {\log 2}}$.
Just as we would expect for LSH with bitsampling and far points at distance $cr=d/2$.

\paragraph{Example 2 (Locally Growth-Restricted Data)} Another interesting setting to consider is when the data is locally
growth-restricted, as considered by Datar et al. \cite[Appendix
A]{datar2004locality}.
This means that the number of points within distance $r$ of
$q$, for any $r>0$, is at most $r^c$ for some small constant $c$.
In \cite{datar2004locality}, the LSH framework is changed by providing the parameter $k$ to the
hash function. However, if we fix $r = k$, our algorithm will find a candidate
set of size
$W=O(\log n)$.
So, our algorithm takes advantage of restricted
growth and adapts automatically on such inputs.

The proof from \cite{datar2004locality} works,
since they also inspect all colliding points.  It is easy to see that the integral
$\int_1^{r/\sqrt{2}}e^{-Bc}c^b \text{ d}c$ is still bounded by $2^{O(b)}$ when we start
at $c=0$ instead of $c=1$, since the integrand is less than 1 in this interval.

\section{Lemma~\ref{lem:monsum}}
\label{sec:monsum}

\begin{lemma}
   \label{lem:monsum}
   Let $x_1 \ge x_2 \ge \dots$ be a non-increasing series of real numbers.
   Let $X_n = \sum_{k=1}^n x_k$ be the $k$th prefix sum.
   Then it holds:
   \begin{align}
      n/X_n
         &\le (n+1)/X_{n+1} \label{eq:rect} \\
      \sum_{k = 1}^n 1/X_k
         &= O(n\log n/X_n).
      \label{eq:total}
   \end{align}
\end{lemma}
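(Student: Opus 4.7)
The plan is to prove (\ref{eq:rect}) by a one-line algebraic manipulation, then bootstrap it into (\ref{eq:total}) via the harmonic sum. Both parts are elementary, so I do not expect a genuine obstacle; the only care needed is keeping the monotonicity hypothesis in view and implicitly assuming $X_n > 0$ so reciprocals make sense (which is automatic in all applications of this lemma, since the $x_k$ are strictly positive collision probabilities).

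For (\ref{eq:rect}), I would clear denominators: $n/X_n \le (n+1)/X_{n+1}$ is equivalent to $n X_{n+1} \le (n+1) X_n$. Substituting $X_{n+1}=X_n+x_{n+1}$ and simplifying reduces this to $n\, x_{n+1} \le X_n$. But this is immediate from the assumption $x_1 \ge x_2 \ge \cdots$, since every one of the $n$ summands of $X_n = x_1 + \cdots + x_n$ is at least $x_{n+1}$. So inequality (\ref{eq:rect}) is really the statement that the sequence $n \mapsto n/X_n$ is non-decreasing, which in turn is just the fact that the running averages $X_n/n$ of a non-increasing sequence are themselves non-increasing.

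For (\ref{eq:total}), I would iterate (\ref{eq:rect}). A straightforward induction on the gap $n-k$ gives $k/X_k \le n/X_n$ for every $k \le n$, and hence $1/X_k \le (n/k)(1/X_n)$. Summing and invoking the standard harmonic bound $H_n = \ln n + O(1)$,
\[
\sum_{k=1}^{n} \frac{1}{X_k} \;\le\; \frac{n}{X_n}\sum_{k=1}^{n}\frac{1}{k} \;=\; \frac{n\,H_n}{X_n} \;=\; O\!\left(\frac{n\log n}{X_n}\right),
\]
which is exactly (\ref{eq:total}). The argument is self-contained in two short steps and there is no hidden subtlety; the proof writes itself from monotonicity plus the harmonic sum.
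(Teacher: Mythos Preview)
Your proof is correct and follows essentially the same route as the paper: both establish (\ref{eq:rect}) by reducing to $n\,x_{n+1}\le X_n$ via monotonicity, then obtain (\ref{eq:total}) by iterating (\ref{eq:rect}) to $k/X_k\le n/X_n$ and summing against the harmonic series. There is nothing to add.
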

Here we have used the approximation for harmonic numbers, $H_n = 1 + 1/2 + \dots + 1/n = \log n + O(1)$, by Euler~\cite{euler1740progressionibus}.
\begin{proof}
   Since the values $x_k$ are non-increasing, we have
   $X_n \ge n x_n \ge n x_{k+1}$
   and so
   \begin{align*}
      (n + 1) X_n
      \ge n X_n + n x_{n+1}
      = n X_{n+1}
   \end{align*}
   which is what we want for \eqref{eq:rect}.
   For the second inequality, we use \eqref{eq:rect} inductively, we get $a/X_a \le b/X_b$ whenever $a\le b$.
    Hence we can bound \eqref{eq:total} term-wise as
    \begin{align*}
       \sum_{k = 1}^n \frac1{X_k}
       \stackrel{\eqref{eq:rect}}{\le} \sum_{k = 1}^n \frac{n}{k X_n}
        = \frac{n}{X_n} \sum_{k = 1}^n \frac{1}{k}
        = \frac{n}{X_n} H_n
        = O(n\log n/X_n).
    \end{align*}
\end{proof}
We may notice that the bound is tight for $x_1=x_2=\dots=x_n$.
Say $x_k=1$ for all $k$, then $X_k=k$ and $\sum_{k=1}^n1/X_k = H_n = \Omega(n\log n/X_n)$.
In the other extreme, $\sum_{k=1}^n1/X_k\ge n/X_n$, which is sharp when $x_1=1$ and $x_k=0$ for $k\ge2$.

\section{Proof of Corollary~\ref{cor:multi_static}, second part}\label{app:cor:proof}
When $t$ is small compared to $n$, the multiprobing radius $\alpha$ can be made smaller.
In this regime, we hence consider the following expansion:
\begin{align*}
   f(\alpha,p_1)
   &=
   1+\frac{\infdiv{\alpha}{1-p_1}}{\HE(\alpha)}
   \\&=
   \frac{\HE(\alpha)+\infdiv{\alpha}{1-p_1}}{\HE(\alpha)+\infdiv{\alpha}{1-p_2}}
   f(\alpha,p_2)
   \\&=
   \frac{\log\frac1{p_1} + \alpha\log\frac{p_1}{1-p_1}}
   {\log\frac1{p_2} + \alpha\log\frac{p_2}{1-p_2}}
   f(\alpha,p_2)
   \\&=
   \left(\frac{\log p_1}{\log p_2}
      + \frac{\log\frac{p_1}{1-p_1}\log\frac1{p_2} - \log\frac1{p_1}\log\frac{p_2}{1-p_2}}
      {\left(\log\frac1{p_2}\right)^2}\alpha + O(\alpha^2)
   \right)f(\alpha,p_2)
   \\&=
   (\rho + \psi \alpha/\log{1/p_2} + O(\alpha^2))f(\alpha,p_2),
   \numberthis \label{eqn:ff_asymp}
\end{align*}
for constants $\rho$ and $\psi$ depending on $p_1$ and $p_2$.
This already gives us that we are asymptotically optimal, as long as $\alpha f(\alpha,p_2)$ goes to 0 as $f(\alpha,p_2)$ goes to $\infty$.
To see that this is indeed the case, we need the following asymptotics:
\begin{align*}
   \HE(\alpha) + \infdiv{\alpha}{1-p}
   &= \alpha \log\tfrac1{1-p} + (1-\alpha)\log\tfrac1p
   \\&= \log\tfrac1p + O(\alpha)
   \\
   \HE(\alpha)
   &= \alpha \log\tfrac1\alpha + (1-\alpha)\log\tfrac1{1-\alpha}
   \\&= \alpha \log\tfrac1\alpha + (1-\alpha)(\alpha-O(\alpha^2))
   \\&= \alpha (\log\tfrac1\alpha + 1) + O(\alpha^2)
   \\
   f(\alpha,p)
   &= (\HE(\alpha) + \infdiv{\alpha}{1-p})/\HE(\alpha)
   \\&= \frac{\log\frac1p}{\alpha(\log\frac1\alpha+1)} + O(1/\log\tfrac1\alpha)
   \numberthis \label{eqn:fx_asymp}
\end{align*}
We would like invert~\eqref{eqn:fx_asymp} to tell us how fast $\alpha$ goes to
zero, and plug that into~\eqref{eqn:ff_asymp}.
To this end, we let $y=f(\alpha,p_2)/\log\tfrac1{p_2}$.
Then it is clear that, at least asymptotically, $1/y^2<\alpha<1/y$.
That tells us $\alpha=y^{-\Theta(1)}$, and we can use this estimate to ``bootstrap'' the inversion:
\begin{align*}
   \alpha
   &=
   \frac1{y(\log\frac1\alpha+1)} + O\left(\frac \alpha{y\log\tfrac1\alpha}\right)
   \\&=
   \frac1{y\left(\log\left[\frac1{\left(
               \frac1{y(\log\frac1\alpha+1)} + O(1/(y^2\log\tfrac1\alpha))
   \right)}\right]+1\right)} + O(1/(y^2\log\tfrac1\alpha))
   \\&=
   \frac1{y\left( \log\left[y(\log\frac1\alpha+1)\right] + \log\left[\frac1{1+O(1/y)}\right] + 1\right)} + O(1/y^2)
   \\&=
   \frac1{y\log y + O(y\log\log y)} + O(1/y^2)
   \\&=
   \frac1{y\log y} + O\left(\frac{\log\log y}{y(\log y)^2}\right)
   \numberthis \label{eqn:inv_asymp}
\end{align*}
Plugging the result back into~\eqref{eqn:ff_asymp} we finally get:
\begin{align*}
    \log\E(W_k)
   &= (\log t) f(\alpha,p_1)
   \\&= \log t \left(\rho + \tfrac{\psi}{(\log{1/p_2})y\log y} + O\left(\frac{\log\log y}{y(\log y)^2}\right)\right) f(\alpha,p_2)
   \\&=
   \log t \left(\rho f(\alpha,p_2)
      + \frac{\psi}{\log f(\alpha,p_2)}
   + O\left(\frac{\log\log f}{(\log f)^2}\right)\right)
   \\&=
   \rho \log n
   + \log t \left(\frac{\psi}{\log\frac{\log n}{\log t}}
   + O\left(\frac{\log\log\frac{\log n}{\log t}}
      {\left(\log\frac{\log n}{\log t}\right)^2}\right)\right),
\end{align*}
as $\frac{\log n}{\log t}$ goes to $\infty$, i.e., 
$\tau$ goes to $0$.

\section{A Different Approach to Solving SRR with LSH}
\label{sec:appendix2} We reconsider the approach to solve SRR presented in Indyk's Ph.D.
thesis~\cite[Page~12]{Indyk00} under the name ``enumerative PLEB''. 
While his method does not yield good running times directly, it is possible to
combine a number of very recent results, to get running times similar to the ones
achieved by our methods. We give a short overview of this approach next. As in 
Section~\ref{sec:static:analysis}, we assume that the number of points $t$ to 
report is known. At the end of this section we describe a counting argument that 
is also contained in Indyk's Ph.D. thesis \cite{Indyk00} that allows to solve 
the $c$-approximate spherical range counting problem in an output-sensitive way. 

Indyk describes a black-box reduction to solve SRR using a standard dynamic data
structure for the $(c,r)$-near neighbor problem.
It works by repeatedly querying an $(c,r)$-near point data structure (time
    $O(n^{\rho_q})$) and then deleting the point found (time
$O(n^{\rho_u})$), where $\rho_q$ and $\rho_u$ are the query- and
update-parameters. (For a standard LSH approach, we have $\rho_q = \rho_u$.)
This is done until the data structure no longer reports any points within
distance $r$.
Due to the guarantees of an $(c,r)$-near neighbor data structure, in the worst
case the algorithm recovers all points within distance $cr$, giving a total
running time of $t'(n^{\rho_q}+n^{\rho_u})$, where $t'$ is the number of points
within distance $cr$ which might yield a running time of $\Omega(n^{1+\rho})$ as 
noticed in Appendix~\ref{sec:worstcase}. 

Of course, we can never guarantee sublinear query time when $t'$ is large, but
we can use a space/time-tradeoff-aware to improve the factor of $t$, when the number of returned points is large.

We will assume the $(c,r)$-near neighbor data structure used in the reduction is based on LSH.
In~\cite{andoni2016lower}, Andoni et al. describe a general data structure comprising
loosely ``all hashing-based frameworks we are aware of'':
\begin{definition}[List-of-points data structure]
    \mbox{}
\begin{itemize}
   \item Fix sets $A_i\subseteq \mathcal{R}^d$, for $i=1\ldots m$; with each
   possible query point $q \in \mathcal{R}^d$, we associate a set of indices
   $I(q) \subseteq [m]$ such that $i \in I(q) \Leftrightarrow q \in A_i$;
   \item For a given dataset $S$, the data structure maintains $m$ lists of
     points $L_1, L_2, \dots, L_m$, where $L_i=S\cap A_i$.
\end{itemize}
\end{definition}
Having such a data structure, we perform queries as follows: For a query point $q$, 
we scan through each list $L_i$ for $i \in I(q)$ and check whether there exists some $p \in L_i$ with $\|p - q\| \leq cr$. If it exists, return $p$.

Data structures on this form naturally allow insertions of new points, and we
notice that if ``Lists'' are replaced by ``Sets'' we can also efficiently perform updates. 

To solve spherical range reporting, we propose the following query algorithm for
a point $q$:
\begin{enumerate}
   \item For each $i \in I(q)$ look at every point $x$ in $L_i$.
   \item If $\|x-q\| \le r$, remove the point from all lists, $L_j$, where it is present. 
\end{enumerate}
This approach allows for a very natural space/time-tradeoff. Assuming that
querying the data structure takes expected time $O(n^{\rho_q})$ and updates take
expected time $O(n^{\rho_u})$, the expected running time of the query is 
$O(n^{\rho_q} + t n^{\rho_u})$. This asymmetry can be exploited with a time/space tradeoff.
In very recent papers~\cite{laarhoven2015tradeoffs,Christiani16,andoni2016lower}
it was shown how to obtain such tradeoffs in Euclidean space for approximation 
factor $c \geq 1$, for any pair
($\rho_q, \rho_u$)
that satisifies
\begin{align*}
c^2 \sqrt{\rho_q} + (c^2 - 1) \sqrt{\rho_u} = \sqrt{2c^2 - 1}.
\end{align*}
To minimize running time, we may take exponents balancing $T = n^{\rho_q} = t n^{\rho_u}$ and obtain
\begin{align*}
   \frac{\log T}{\log n} &= \frac{1}{2 c^2-1}
   + \frac{c^2-1}{2 c^2-1}\tau
   + \frac{c^2 \left(c^2-1\right) }{2 c^2-1}\left(2-\tau-2 \sqrt{1-\tau}\right)
   \stackrel{\text{(*)}}\le \rho + (1-c^4\rho^2)\tau,
\end{align*}
where $\tau=\frac{\log t}{\log n}$ and $\rho = 1/(2c^2-1)$.
Here (*) holds for $t \le \frac{2 c^2-1}{c^4}$, and $T=O(t)$ otherwise. Note that
this approach requires knowledge of $t$. A visualization of the running time guarantees of this
approach is shown in Figure~\ref{fig:indyk}. Note that it requires knowledge
of $t$ and does not adapt to the expansion around the query point. It would 
be interesting to see whether our adaptive methods could be used to obtain 
a variant that is query-sensitive. Next, we discuss an algorithm for the 
spherical range counting problem that can be used to obtain an 
approximation of the value 
$t$ sufficient for building the data structure presented here. 

\begin{figure}[t]
\centering
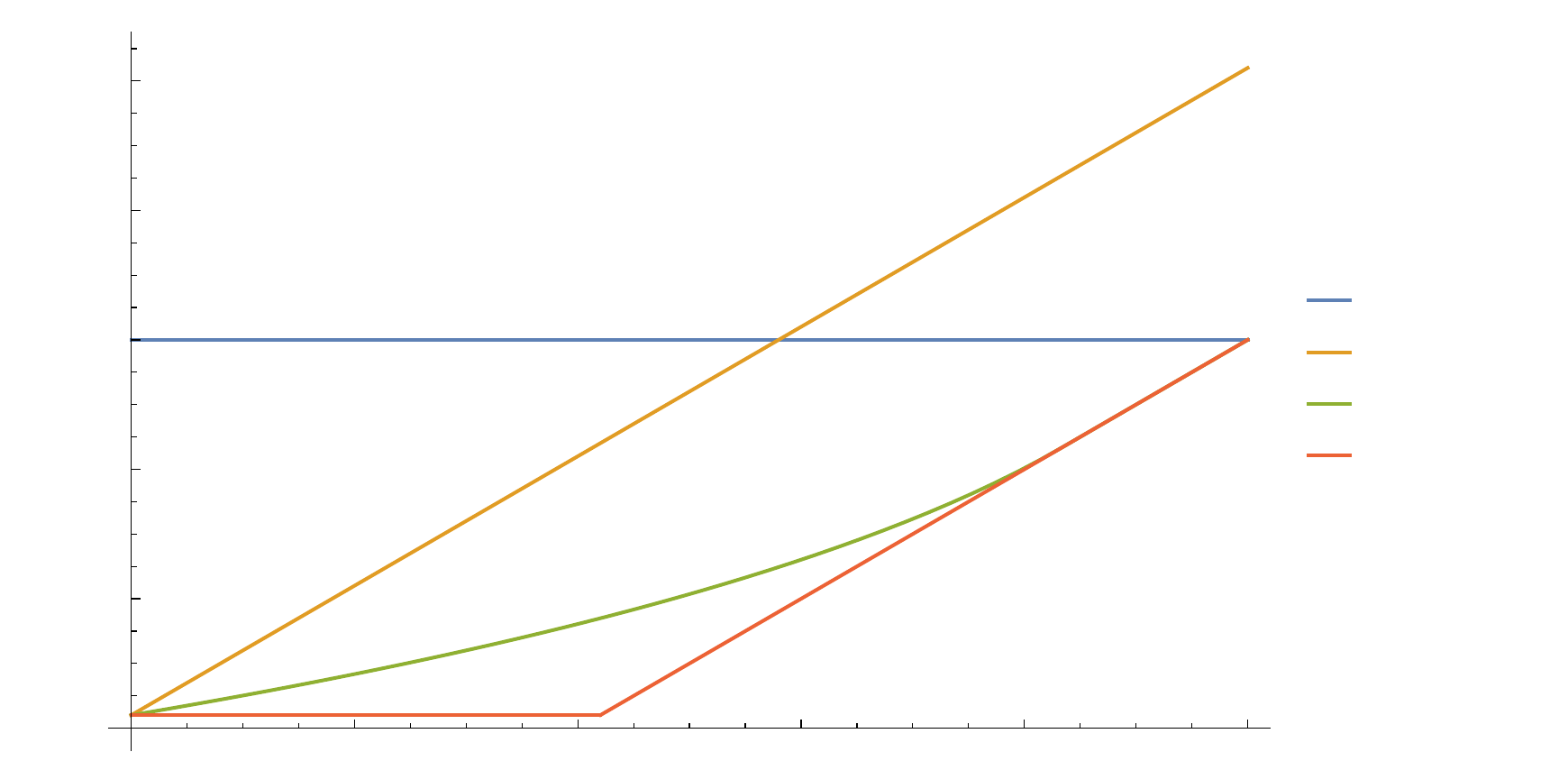
\caption{Visualization of the running time guarantees of
    the space/time-tradeoff list-of-points data structure 
    for $c=1.3$ in Euclidean space.
    The $x$-axis shows the value of $t$ compared to $n$, the $y$-axis
    shows the expected work $W$.
    For comparison, we plotted the lower bound of $O(n^\rho + t)$, the running
    time $O(t n^\rho)$ of the naïve LSH approach, and the running time $O(n)$ of
a linear scan.}
\label{fig:indyk}
\end{figure}

\subsection{Solving $c$-approximate Spherical Range Counting}
In \cite[Chapter 3.6]{Indyk00}, Indyk shows that by performing $O((\log n)^2/\alpha^3)$ queries to 
independently built $(c,r)$-near neighbor data structures, there is an algorithm
that returns for a query $q$ a number $C$ such that $(1-\alpha)N_r(q) \leq C
\leq (1 + \alpha) N_{cr}(q)$ with constant probability. The running time of the 
black-box reduction is $O(n^\rho (\log n)^2/\alpha^3)$. We show in this
section that we can solve the problem in time $O((n/t)^\rho \log n/\alpha^3)$.

At the heart of the algorithm of \cite{Indyk00} is a subroutine that has 
the following output behavior for fixed $C$:
\begin{enumerate}
   \item If $N_{cr}(q) \le C(1-\alpha)$, it will answer SMALLER
   \item If $N_{r}(q) \ge C$, it will answer GREATER
\end{enumerate}
The subroutine uses $O(\log n/\alpha^2)$ queries of independently build 
$(c,r)$-near neighbor data structures, each built by sampling $n/C$ points from the data set. 

We can use the above subroutine to solve the spherical range counting problem in
time $O((n/t)^\rho \log n/\alpha^3)$ time as follows.
Half the size of $\alpha$, and perform a geometrical search
for the values $t=n, (1-\alpha)n, (1-\alpha)^2n,\ldots$\,. Assuming that a query on a data structure that
contains $n$ points takes expected time $O(n^\rho)$ and stopping as soon as the
algorithm answers ``Greater'' for the first time, we obtain a running time
(without considering the $O(\log n/\alpha^2)$ repetitions for each $t$ value) of
\begin{align*}
    &\left(\frac{n}{n}\right)^\rho
   + \left(\frac{n}{n(1-\alpha)}\right)^\rho
   + \left(\frac{n}{n(1-\alpha)^2}\right)^\rho + \dots
   + \left(\frac{n}{t}\right)^\rho
   \\&\le
   \left(\frac{n}{t}\right)^\rho
   + \left(\frac{n(1-\alpha)}{t}\right)^\rho
   + \left(\frac{n(1-\alpha)^2}{t}\right)^\rho
   + \dots
   \\&=
   \left(\frac{n}{t}\right)^\rho \frac1{1-(1-\alpha)^\rho}
   \le
   \left(\frac{n}{t}\right)^\rho \frac1{\alpha\rho},
\end{align*}
which results in a total running time of $O((n/t)^\rho \log n/\alpha^3)$. 


\end{document}